\newtheorem{theorem}{Theorem}
\newtheorem{definition}{Definition}
\newtheorem{remark}{Remark}
\newtheorem{assumption}{Assumption}
\newtheorem{proposition}{Proposition}
\DeclareMathOperator*{\argmin}{arg\,min}
\renewcommand{\cal}[1]{\mathcal{ #1 }}
\newcommand{\bb}[1]{\mathbb{ #1 }}
\newcommand{\R}{\bb{R}}
\DeclarePairedDelimiter\norm{\lVert}{\rVert}%
\DeclarePairedDelimiterX{\Set}[2]\{\}{%
\title{\LARGE \bf
Confidence-Aware Safe and Stable Control of Control-Affine Systems 
}
\author{Shiqing Wei, Prashanth Krishnamurthy, and Farshad Khorrami
\thanks{The authors are with Control/Robotics Research Laboratory, Department of Electrical and Computer Engineering, NYU Tandon School of Engineering, 5 Metrotech Center, Brooklyn, NY 11201, USA. 
{\tt\small \{shiqing.wei, prashanth.krishnamurthy, khorrami\}@nyu.edu}}%
\thanks{This work was supported by the New York University Abu Dhabi (NYUAD) Center for Artificial Intelligence and Robotics funded by Tamkeen under the NYUAD Research Institute Award CG010.}
}
\begin{document}
\maketitle
\thispagestyle{empty}
\pagestyle{empty}

\begin{abstract}
Designing control inputs that satisfy safety requirements is crucial in safety-critical nonlinear control, and this task becomes particularly challenging when full-state measurements are unavailable. In this work, we address the problem of synthesizing safe and stable control for control-affine systems via output feedback (using an observer) while reducing the estimation error of the observer. To achieve this, we adapt control Lyapunov function (CLF) and control barrier function (CBF) techniques to the output feedback setting. Building upon the existing CLF-CBF-QP (Quadratic Program) and CBF-QP frameworks, we formulate two confidence-aware optimization problems and establish the Lipschitz continuity of the obtained solutions. To validate our approach, we conduct simulation studies on two illustrative examples. The simulation studies indicate both improvements in the observer's estimation accuracy and the fulfillment of safety and control requirements.
\end{abstract}

\section{Introduction}
For safety-critical control problems, it is of paramount importance to design controllers that not only satisfy the system performance requirements, but also ensure the system's safety during operation. Recently, control barrier functions (CBFs) have emerged as a popular approach to generate safe control inputs for a wide range of control tasks \cite{ames2019control, wei2024diffocclusion, dai2023safe, dai2023data}. In general, the CBF is designed based on a designated safe set, and the controls are generated to make this safe set forward-invariant, ensuring the safety of the system. However, this task becomes particularly challenging in scenarios where full-state measurements are unavailable, and one instead has to rely on partial information about the system states. In this work, we address the problem of formulating controls that can both guarantee safety via output feedback (using an observer) and improve confidence about the states.

To achieve these objectives, we employ an EKF (Extended Kalman Filter) based nonlinear observer \cite{reif1998ekf} and adapt control Lyapunov functions (CLFs) and CBFs to the setting of output feedback. We extend the existing quadratic programs (QPs), such as CLF-CBF-QP and CBF-QP, and design control inputs that can speed up the observer's convergence and meet the necessary safety and control requirements.

Synthesizing safe control based on output measurements is an ongoing research area \cite{cosner2022self, agrawal2022safe, wang2022observer, clark2021control, wei2020towards}. In \cite{cosner2022self} and \cite{wei2020towards}, the authors develop motion-planning systems to generate safe trajectories based on image sensor/laser measurements. Other works (e.g., \cite{agrawal2022safe, wang2022observer}) have focused on employing observers with a quantified estimation error and solving QPs for a safe controller. We inherited the definition of safe output-feedback controllers and observer-robust CBFs in \cite{agrawal2022safe}. In \cite{clark2021control}, CBFs are formulated for stochastic systems with incomplete state information. However, these approaches primarily follow a ``top-down'' paradigm in the sense that no interaction with the observer is made in the control design process. Typically, the observer's estimation error is influenced by the control inputs, and in this study, we demonstrate that safety can be consistently achieved by selecting controls that enhance the estimation accuracy of the observer.

Our work is also related to active sensing or observability optimization, where a ``bottom-up'' approach is adopted to reduce uncertainty or enhance system observability. In \cite{hinson2013path}, optimal paths are identified to improve the system observability for under-sensed vehicles in a planar uniform flow field. The works \cite{salaris2019online} and \cite{napolitano2022information} propose a perception-aware trajectory generation method aimed at maximizing the information collected by output measurements for autonomous robots. In \cite{coleman2021observability}, the authors address the problem of observability-aware target tracking for mobile robots using a nonlinear model predictive control framework. In contrast to the aforementioned studies, our work incorporates both stability and safety requirements into the search for control inputs that enhance system observability.

\textit{Our Contributions:} (1) We present an optimization-based control approach that addresses the design of safe and stabilizing controls for control-affine nonlinear systems using output feedback, specifically focusing on enhancing state confidence. (2) We extend the existing CLF-CBF-QP and CBF-QP frameworks and formulate two optimization problems incorporating confidence-aware considerations. (3) We prove the feasibility of these optimization problems and demonstrate the Lipschitz continuity of the obtained solutions. (4) We demonstrate the effectiveness of our approach through simulation studies on two illustrative examples: a second-order nonlinear system stabilization problem and a unicycle tracking problem. The results indicate notable improvements in the observer's estimation accuracy, alongside the successful fulfillment of safety and control requirements.

\section{EKF-Based Nonlinear Observer}\label{sec:ekf}
Consider the following nonlinear system with dynamics
\begin{equation}\label{eq:nonlinear_sys}
    \dot{x} = p(x,u), \quad z = q(x)
\end{equation}
where $x \in \R^{n_x}$ is the state, $u \in \R^{n_u}$ is the control input, and $z \in \R^{n_z}$ is the output. The functions $p: \R^{n_x} \times \R^{n_u} \rightarrow \R^{n_x}$ and $q: \R^{n_x} \rightarrow \R^{n_z}$ are assumed to be locally Lipschitz and $\cal{C}^2$ functions. An EKF based nonlinear observer for system \eqref{eq:nonlinear_sys} is proposed in \cite{reif1998ekf}
\begin{equation} \label{eq:observer}
    \dot{\hat{x}} = p(\hat{x},u) + K(t)[z-q(\hat{x})]
\end{equation}
where $\hat{x} \in \R^{n_x}$ is the estimated state and the time-varying observer gain $K(t)$ is a $n_x \times n_z$ matrix. Denote by $A(t)$ and $C(t)$ the following partial differentials
\begin{equation}\label{eq:partials}
    A(t) = \frac{\partial p}{\partial x}(\hat{x}, u) \quad \text{and} \quad C(t) = \frac{\partial q}{\partial x}(\hat{x}).
\end{equation}
For $\kappa \geq 0$ and symmetric positive definite matrices $Q \in \R^{n_x \times n_x}$ and $R \in \R^{n_z \times n_z}$, the observer gain is defined as 
\begin{equation}
    K(t) = P(t) C^\top(t) R^{-1}
\end{equation}
where $P(t)$ is the solution to the Riccati equation 
\begin{equation}\label{eq:riccati_p}
    \dot{P} = \kappa P + AP + PA^\top - PC^\top R^{-1} CP + Q.
\end{equation}
We drop the time dependence of $A(t)$, $C(t)$, and $P(t)$ for simplicity when it does not cause confusion.

\begin{assumption}\label{ass:ass_p}
    There exist two constants $\underline{p}, \bar{p} > 0$ such that
\begin{equation}
    \underline{p} I \leq  P(t) \leq \bar{p} I, \quad \forall t \geq 0.
\end{equation}
\end{assumption}

The above assumption is made in many works on EKF-based observers (e.g., \cite{reif1998ekf, bonnabel2014contraction}). As pointed out in \cite{bonnabel2014contraction}, this assumption can be practically checked in the following way: the user keeps track of the bounds $\underline{p}(t)$ and $\bar{p}(t)$ such that $\underline{p}(t) I < P(s) < \bar{p}(t) I$ for $s \leq t$ and verify that the bound on the estimation error associated with $\underline{p}(t)$ and $\bar{p}(t)$ holds (at least) up to time $t$. Analogous to the EKF in the probabilistic setting, we call $P(t)$ the \textit{uncertainty} of the estimated states and $S(t) = P^{-1}(t)$ the \textit{confidence} of the observer. Noting that $\dot{S}(t) = -P^{-1} (t) \dot{P} (t) P^{-1} (t)$, we obtain the dynamics of the confidence $S(t)$ by rearranging \eqref{eq:riccati_p}: 
\begin{equation}\label{eq:riccati_s}
    \dot{S} = -\kappa S - A^\top S - SA + C^\top R^{-1} C - SQS.
\end{equation}

\section{Observer-Based Safe and Stable Control}
\subsection{Local Exponential Stability of the Observer}
Consider the plant
\begin{equation}\label{eq:control_affine_sys}
    \dot{x} = f(x) +g(x)u, \quad z = q(x)
\end{equation}
where we assume that $f: \cal{X} \rightarrow \R^{n_x}$, $g : \cal{X} \rightarrow \R^{n_x \times n_u}$, and $q: \cal{X} \rightarrow \R^{n_z}$ are of class $\cal{C}^2$, and note $\cal{I} = [0, t_{\text{max}})$ as the maximal interval of existence. The state $x$, control $u$, and output $z$ are of dimensions $n_x$, $n_u$, and $n_z$, respectively. To model the physical constraints of the real world, we assume $x \in \cal{X}$ and $u \in \cal{U}$ where $\cal{X}$ and $\cal{U}$ are compact subsets of $\R^{n_x}$ and $\R^{n_u}$, respectively. We further assume that the origin is an equilibrium of \eqref{eq:control_affine_sys} for $u=0$, $q(0) =0$, and $0 \in \cal{X}$.

As \eqref{eq:control_affine_sys} is a special case of \eqref{eq:nonlinear_sys}, the observer \eqref{eq:observer} is equally applicable to system \eqref{eq:control_affine_sys}. Recall the dynamics of the estimated state $\hat{x}(t)$ using the confidence $S(t)$ are 
\begin{equation}\label{eq:observer_recall}
    \dot{\hat{x}} = f(\hat{x}) + g(\hat{x})u + S^{-1}(t) C^\top(t) R^{-1}[z-q(\hat{x})],
\end{equation}
and the partial differentials in \eqref{eq:partials} become
\begin{equation}\label{eq:partials_affine}
    A(t) = \frac{\partial f}{\partial x}(\hat{x}) + \frac{\partial g}{\partial x}(\hat{x})u \quad \text{and} \quad C(t) = \frac{\partial q}{\partial x}(\hat{x})
\end{equation}
where $\frac{\partial f}{\partial x}$ and $\frac{\partial q}{\partial x}$ are Jacobian matrices, and $\frac{\partial g}{\partial x}$ is a tensor\footnote{$\frac{\partial f}{\partial x}$ and $\frac{\partial q}{\partial x}$ are $n_x \times n_x$ and $n_z \times n_x$ matrices respectively. $\frac{\partial g}{\partial x}$ is a three dimensional tensor of size $n_x \times n_x \times n_u$. The product $\frac{\partial g}{\partial x} u$ is a $n_x \times n_x$ matrix.}. As $R$ is a fixed positive definite matrix chosen by the user, it can be bounded by $\underline{r}I \leq R \leq \bar{r}I$ with $\underline{r}, \bar{r} >0$. We further assume that the estimated state $\hat{x} \in \hat{\cal{X}}$ and $\hat{\cal{X}}$ is a compact subset of $\R^{n_x}$. Under an output-feedback controller $\pi: \cal{I} \times \hat{\cal{X}} \times \R^{n_z} \rightarrow \cal{U}$, the closed-loop system \eqref{eq:control_affine_sys} along with the observer is 
\begin{align}\label{eq:control_affine_sys_cl}
    \dot{x} &= f(x) +g(x)\pi(t,\hat{x},z), \quad z = q(x), \notag \\
    \dot{\hat{x}} &= f(\hat{x}) + g(\hat{x})\pi(t,\hat{x},z) + S^{-1} C^\top R^{-1}[z-q(\hat{x})].
\end{align}

Denote the estimation error of the observer by 
\begin{equation}\label{eq:est_error}
    \zeta(t) = x(t) - \hat{x}(t).
\end{equation}
It is proved in \cite{reif1998ekf} that the estimation error $\zeta(t)$ is locally exponentially convergent to zero.

\begin{proposition}[\cite{reif1998ekf}]\label{prop:exp_stable}
Under Assumption \ref{ass:ass_p} and assumptions on the boundedness of $\cal{X}$, $\cal{U}$, and $\hat{\cal{X}}$ and the $\cal{C}^2$-smoothness of $f$, $g$, and $q$, the EKF-based observer in \eqref{eq:observer_recall} is a local exponential observer. More specifically, there exist positive real numbers $\epsilon, \eta> 0$ and $\theta > \kappa/2$ such that
\begin{equation}\label{eq:exp_error}
    \norm{\zeta(t)} \leq \eta \norm{\zeta(0)} e^{-\theta t}
\end{equation}
for $t \geq 0$ with $\zeta(0) \in B_\epsilon$ where $B_\epsilon = \{v \in \R^{n_x}: \norm{v} < \epsilon \}$. 
\end{proposition}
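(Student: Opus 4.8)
The plan is to run the classical Lyapunov argument for extended Kalman filters, using the confidence matrix $S(t)$ itself as the quadratic weight and exploiting the Riccati identity \eqref{eq:riccati_s} to get an exact cancellation. \textbf{Error dynamics.} Subtracting \eqref{eq:observer_recall} from \eqref{eq:control_affine_sys} and Taylor-expanding $f$, $g$, $q$ about $\hat x$ — legitimate since these maps are $\cal{C}^2$ and $(x,\hat x,u)$ evolves in the compact set $\cal{X}\times\hat{\cal{X}}\times\cal{U}$ — yields, with $A$, $C$ as in \eqref{eq:partials_affine},
\[
    \dot\zeta = \bigl(A - S^{-1}C^\top R^{-1}C\bigr)\zeta + \varphi, \qquad \varphi = \varphi(t,x,\hat x,u),
\]
where $\varphi$ gathers the second-order remainders of $f(x)+g(x)u-f(\hat x)-g(\hat x)u$ and of $-S^{-1}C^\top R^{-1}\bigl(q(x)-q(\hat x)\bigr)$. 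The estimate I need here is a uniform quadratic bound $\norm{\varphi}\le\gamma\norm{\zeta}^2$ on the compact region, where $\gamma$ depends on the suprema of the second derivatives of $f,g,q$, on $\sup\norm{C}$, on $1/\underline r$, and on $\bar p$, the last two entering through the factor $S^{-1}C^\top R^{-1}$ multiplying the quadratic part of $q(x)-q(\hat x)$.

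\textbf{Lyapunov decrease.} Take $V(t)=\zeta^\top S\zeta=\zeta^\top P^{-1}\zeta$; Assumption \ref{ass:ass_p} gives $\tfrac1{\bar p}I\le S\le\tfrac1{\underline p}I$, hence $\tfrac1{\bar p}\norm{\zeta}^2\le V\le\tfrac1{\underline p}\norm{\zeta}^2$. Differentiating along the error dynamics and substituting $\dot S$ from \eqref{eq:riccati_s}, the terms $A^\top S+SA$ and $2C^\top R^{-1}C$ cancel, leaving
\[
    \dot V = -\kappa V - \zeta^\top C^\top R^{-1}C\zeta - \zeta^\top SQS\zeta + 2\zeta^\top S\varphi \le -\kappa V - \zeta^\top SQS\zeta + 2\zeta^\top S\varphi .
\]
Writing $\underline q=\lambda_{\min}(Q)>0$, the inequalities $Q\ge\underline q I$ and $S\ge\tfrac1{\bar p}I$ give $\zeta^\top SQS\zeta\ge\tfrac{\underline q}{\bar p^2}\norm{\zeta}^2\ge\tfrac{\underline q\,\underline p}{\bar p^2}V$; this is exactly the term that pushes the decay rate strictly below $-\kappa$, hence $\theta$ strictly above $\kappa/2$. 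The cross term is controlled by $2\zeta^\top S\varphi\le\tfrac2{\underline p}\norm{\zeta}\,\norm{\varphi}\le\tfrac{2\gamma}{\underline p}\norm{\zeta}^3\le\tfrac{2\gamma\bar p}{\underline p}\norm{\zeta}\,V$, so that $\dot V\le\bigl(-\kappa-\tfrac{\underline q\,\underline p}{\bar p^2}+\tfrac{2\gamma\bar p}{\underline p}\norm{\zeta}\bigr)V$.

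\textbf{Region of attraction and conclusion.} Choose $\epsilon_0>0$ with $\tfrac{2\gamma\bar p}{\underline p}\epsilon_0\le\tfrac{\underline q\,\underline p}{2\bar p^2}$, so that $\norm{\zeta(t)}\le\epsilon_0$ forces $\dot V\le-2\theta V$ with $\theta:=\tfrac\kappa2+\tfrac{\underline q\,\underline p}{4\bar p^2}>\tfrac\kappa2$. Put $c:=\epsilon_0^2/\bar p$ and $\epsilon:=\epsilon_0\sqrt{\underline p/\bar p}$. If $\zeta(0)\in B_\epsilon$ then $V(0)\le\tfrac1{\underline p}\norm{\zeta(0)}^2<c$, and on $\{V\le c\}$ one has $\norm{\zeta}^2\le\bar p V\le\epsilon_0^2$, so the differential inequality is in force throughout and $\{V\le c\}$ is forward invariant. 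Integrating $\dot V\le-2\theta V$ and converting back, $\norm{\zeta(t)}^2\le\bar p V(t)\le\bar p V(0)e^{-2\theta t}\le\tfrac{\bar p}{\underline p}\norm{\zeta(0)}^2 e^{-2\theta t}$, which is \eqref{eq:exp_error} with $\eta=\sqrt{\bar p/\underline p}$.

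\textbf{Main obstacle.} The delicate points are bookkeeping rather than a single hard computation: one must verify that $\gamma$ can be chosen uniformly in $t$, $x$, $\hat x$, and the realized control (this is exactly where compactness of $\cal{X}$, $\hat{\cal{X}}$, $\cal{U}$ and $\cal{C}^2$-regularity are used, and where one restricts to $\zeta$ small enough that the segment from $\hat x$ to $x$ lies in a fixed compact set on which $f,g,q$ are $\cal{C}^2$), and one must not try to propagate $\norm{\zeta(t)}\le\epsilon$ directly from $\norm{\zeta(0)}\le\epsilon$ — since $\eta=\sqrt{\bar p/\underline p}$ may exceed $1$, the invariance must be argued on sublevel sets of $V$, which is why $\epsilon$ and $\epsilon_0$ are kept distinct above.
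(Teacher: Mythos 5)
The paper does not prove this proposition; it is imported verbatim from \cite{reif1998ekf}, and your reconstruction follows essentially the same route as that reference: the Lyapunov function $\zeta^\top P^{-1}\zeta$, the Riccati cancellation of the $A^\top S+SA$ terms, the $-\zeta^\top SQS\zeta$ term supplying the margin $\theta>\kappa/2$, and a sublevel-set invariance argument to handle $\eta>1$. Your computations check out, so this is a correct proof taking the same approach as the cited source.
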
 

\subsection{Control Lyapunov Functions}
Control Lyapunov functions (CLFs) are commonly used to prove a closed-loop system's stability. In the context of output feedback, we introduce the following definition.

\begin{definition}\label{def:clf}
For system \eqref{eq:control_affine_sys} and the observer \eqref{eq:observer_recall} with known estimation error bound \eqref{eq:exp_error}, a class $\cal{C}^2$ positive definite function $V: \cal{X} \cup \hat{\cal{X}} \rightarrow \R_+$ is an observer-based exponentially stabilizing CLF, if there exists a constant $\gamma > 0$ and two class $\cal{K}$ functions $\alpha_1, \alpha_2$ such that $\forall x \in \cal{X} \cup \hat{\cal{X}}$
\begin{align}
    &\ \alpha_1(\norm{x}) \leq V(x) \leq \alpha_2(\norm{x}), \\
    &\inf_{u \in \cal{U}} \left(L_f V(x) + L_g V(x) u + \gamma V(x)\right) \leq 0
\end{align}
where $L_f V(\cdot) = \nabla V^\top (\cdot) f(\cdot)$ and $L_g V(\cdot) = \nabla V^\top (\cdot) g(\cdot)$ are the Lie derivatives of $V$ w.r.t. $f$ and $g$, respectively, and $\nabla V: \cal{X} \cup \hat{\cal{X}} \rightarrow \R^{n_x}$ is the gradient of $V$.
\end{definition}
Compared with the definition of the CLF in \cite{sontag1989universal} and \cite{wei2023neural}, Definition \ref{def:clf} requires the additional exponential stability of the Lyapunov function as in \cite{ames2019control}. As we are working with the estimated state $\hat{x}$ given by the observer, we further extend the domain of $V$ to $\cal{X} \cup \hat{\cal{X}}$. The $\cal{C}^2$-smoothness is required by later analysis in Section \ref{sec:conf_opt}. Next, consider the set 
\begin{equation}\label{eq:control_clf}
    K_{\text{clf}}(\hat{x}) \! = \! \left\{ u\in \cal{U} \! : \! L_f V(\hat{x}) \! + \! L_g V(\hat{x}) u \! + \! \gamma V(\hat{x}) \! < \! 0 \right\}.
\end{equation}
The following result shows that given a sufficiently accurate initial state of the observer, a Lipschitz continuous output-feedback controller $\pi(\hat{x}) \in K_{\text{clf}}(\hat{x})$ renders system \eqref{eq:control_affine_sys} asymptotically stable. 

\begin{theorem}
Assume that the conditions of Proposition \ref{prop:exp_stable} hold and the initial estimation error satisfies $x(0)-\hat{x}(0) \in B_\epsilon$ as in Proposition \ref{prop:exp_stable}. If such a CLF $V$ exists as in Definition \ref{def:clf}, then any Lipschitz continuous controller $\pi(\hat{x}) \in K_{\text{clf}}(\hat{x})$ asymptotically stabilizes system \eqref{eq:control_affine_sys} with the observer \eqref{eq:observer_recall} of known estimation error bound \eqref{eq:exp_error}.
\end{theorem}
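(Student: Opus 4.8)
The plan is to treat the observer-innovation term in $\dot{\hat{x}}$ as an exponentially vanishing perturbation of the nominal CLF decrease condition and to close the argument with a comparison lemma. First, the $\cal{C}^2$-smoothness of $f,g,q$ makes $A(t),C(t)$ locally Lipschitz, Assumption \ref{ass:ass_p} makes $S^{-1}=P$ bounded, and $\pi$ is Lipschitz by hypothesis, so the closed-loop system \eqref{eq:control_affine_sys_cl} has a unique solution on $\cal{I}$. Differentiating $V$ along $\hat{x}(t)$ gives $\dot{V}(\hat{x}) = L_f V(\hat{x}) + L_g V(\hat{x})\pi(\hat{x}) + e(t)$, where $e(t) \defeq \nabla V^\top(\hat{x})\, S^{-1} C^\top R^{-1}(z-q(\hat{x}))$ collects the innovation contribution. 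The first two terms are controlled by the defining property of $K_{\text{clf}}$: since $\pi(\hat{x}) \in K_{\text{clf}}(\hat{x})$, we have $L_f V(\hat{x}) + L_g V(\hat{x})\pi(\hat{x}) \le -\gamma V(\hat{x})$.

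For the perturbation, write $z - q(\hat{x}) = q(x) - q(\hat{x})$ and use the $\cal{C}^2$-smoothness on compact sets to get $\norm{q(x)-q(\hat{x})} \le L_q \norm{\zeta(t)}$; combining this with the uniform bounds $\norm{\nabla V(\hat{x})} \le c_V$ and $\norm{C} \le c_C$ on $\hat{\cal{X}}$, $\norm{S^{-1}} = \norm{P} \le \bar{p}$, and $\norm{R^{-1}} \le 1/\underline{r}$ bounds $e(t)$ by $c\norm{\zeta(t)}$, and then, by Proposition \ref{prop:exp_stable}, by $\delta(t) \defeq c\eta\norm{\zeta(0)}e^{-\theta t}$. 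Hence $\dot{V}(\hat{x}(t)) \le -\gamma V(\hat{x}(t)) + \delta(t)$. By the comparison lemma, $V(\hat{x}(t)) \le e^{-\gamma t}V(\hat{x}(0)) + \int_0^t e^{-\gamma(t-s)}\delta(s)\,ds$, and the integral is a constant multiple of $e^{-\theta t}-e^{-\gamma t}$ when $\gamma\ne\theta$ (and of $te^{-\gamma t}$ when $\gamma=\theta$), so it tends to $0$ as $t\to\infty$ and is bounded above by $M\norm{\zeta(0)}$ uniformly in $t$ for some $M>0$.

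Two conclusions follow. For attractivity: $V(\hat{x}(t)) \to 0$, so $\alpha_1(\norm{\hat{x}(t)}) \le V(\hat{x}(t))$ forces $\hat{x}(t)\to 0$, and then $x(t) = \hat{x}(t)+\zeta(t)\to 0$ by Proposition \ref{prop:exp_stable}. For Lyapunov stability: the uniform bound gives $\norm{\hat{x}(t)} \le \alpha_1^{-1}\!\big(\alpha_2(\norm{\hat{x}(0)}) + M\norm{\zeta(0)}\big)$, and with $\norm{\hat{x}(0)} \le \norm{x(0)}+\norm{\zeta(0)}$ and $\norm{x(t)} \le \norm{\hat{x}(t)}+\norm{\zeta(t)}$, the state norm is made arbitrarily small by taking $\norm{x(0)}$ and $\norm{\zeta(0)}$ small; this also keeps $(x(t),\hat{x}(t))$ inside the compact sets, so the solution extends to $t_{\text{max}}=\infty$ and the bootstrapping used for the compactness constants is consistent. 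Attractivity together with stability is asymptotic stability of the origin.

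I expect the main obstacle to be the careful bookkeeping of the perturbation: $V$ is a Lyapunov function for the \emph{plant} field $f+g\pi$, whereas $\hat{x}$ evolves under the \emph{observer} dynamics, which differ by the innovation $S^{-1}C^\top R^{-1}(z-q(\hat{x}))$; making the estimate uniform requires the compactness-based bounds on $\nabla V$, $C$, and $P$ together with the exponential error decay of Proposition \ref{prop:exp_stable}. A minor subtlety is that $K_{\text{clf}}$ is defined with a strict inequality, so $L_fV+L_gV\pi \le -\gamma V$ may hold with equality at the origin; this is consistent with asymptotic (rather than finite-time) convergence and does not affect the argument.
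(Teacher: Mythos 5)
Your proof is correct and follows essentially the same route as the paper: bound the observer-innovation term by a constant times $\norm{\zeta(t)}\le \eta\norm{\zeta(0)}e^{-\theta t}$ using compactness and Assumption~\ref{ass:ass_p}, then close with the comparison lemma, treating the cases $\gamma=\theta$ and $\gamma\ne\theta$ separately. The only difference is bookkeeping --- the paper tracks $W(t)=V(\hat{x}(t))+K_V M(t)\ge V(x(t))$ and concludes on $\norm{x(t)}$ directly via $\alpha_1$, whereas you track $V(\hat{x}(t))$ alone and recover $x(t)\to 0$ at the end through $x=\hat{x}+\zeta$; your version is marginally cleaner (no $K_V\dot{M}$ term) and also spells out the Lyapunov-stability half, which the paper leaves implicit in its explicit decay bounds.
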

\begin{proof}
Since the conditions of Proposition \ref{prop:exp_stable} hold and $x(0)-\hat{x}(0) \in B_\epsilon$, we have $\norm{x(t) - \hat{x}(t)} \leq M(t)$ with $M(t) = \eta \norm{\zeta(0)} e^{-\theta t}$. Noting that $V$ is continuously differentiable and $\cal{X} \cup \hat{\cal{X}}$ is compact, we denote $K_V$ as the Lipschitz constant of $V$. Then, for $t \geq 0$, we have
$
    V(x(t)) \leq V(\hat{x}(t)) + K_V M(t) \coloneqq W(t).
$
By \eqref{eq:observer_recall}, we have
\begin{align}\label{eq:W_dot}
    \dot{W}(t) &= L_f V(\hat{x}) + L_g V(\hat{x}) u + K_V \dot{M}(t) \nonumber\\
    + &\nabla V(\hat{x})^\top S^{-1}(t) C^\top(t) R^{-1}[q(x)-q(\hat{x})].
\end{align}
Denote by $K_q$ the Lipschitz constant of $q$ and by the definition of $C(t)$ in \eqref{eq:partials_affine}, we have $\norm{C(t)} < K_q$. Note that $\norm{\nabla V(\hat{x})} \leq K_V$, $\norm{S^{-1}(t)} \leq \bar{p}$ (by Assumption \ref{ass:ass_p}), and $\norm{R^{-1}} < \underline{r}^{-1}$, then it follows
\begin{equation}\label{eq:bound_thm1}
    \left \lVert \nabla V(\hat{x})^\top S^{-1}(t) C^\top(t) R^{-1}[q(x)-q(\hat{x})] \right \rVert \leq b M(t)
\end{equation}
where $b = \underline{r}^{-1} \bar{p} K_V K_q^2$. Considering the output-feedback controller $\pi(\hat{x}) \in K_{\text{clf}}(\hat{x})$ and the bound in \eqref{eq:bound_thm1}, one has 
\begin{align*}
    \dot{W}(t) &\leq -\gamma V(\hat{x}(t)) + bM(t) + K_V \dot{M}(t) \\
    &= -\gamma W(t) + (b+\gamma K_V)M(t) + K_V \dot{M}(t).
\end{align*}
Construct the following ODE
\begin{equation}\label{eq:ode_thm1}
    \dot{y} = -\gamma y + (b+\gamma K_V)M(t) + K_V \dot{M}(t),\ y(0) = W(0).
\end{equation}
Then, we have $W(t) \leq y(t)$ for $t \geq 0$ by Comparison Lemma \cite[Lemma B.2]{khalil2015nonlinear}. Since $M(t) = \eta \norm{\zeta(0)} e^{-\theta t}$ and $\dot{M}(t) = -\theta \eta \norm{\zeta(0)} e^{-\theta t}$, we can solve for $y(t)$.

If $\gamma = \theta$, $y(t) = (W(0)+b \eta \norm{\zeta(0)}t)e^{-\gamma t}$. Since $V(x(t)) \leq W(t) \leq y(t)$ and $V(x) \geq \alpha_1(\norm{x})$, we have 
\begin{equation}\label{eq:thm_x_bound1}
    \norm{x(t)} \leq \alpha_1^{-1} \left( (W(0)+b \eta \norm{\zeta(0)}t)e^{-\gamma t} \right).
\end{equation}

If $\gamma \neq \theta$, $y(t) = W(0)e^{-\gamma t} + \frac{c}{\gamma - \theta} (e^{-\theta t} - e^{-\gamma t})$ with $c = \eta \norm{\zeta(0)}(b+\gamma K_V - \theta K_V)$. Similarly, we have 
\begin{equation}\label{eq:thm_x_bound2}
    \norm{x(t)} \leq \alpha_1^{-1} \left( W(0)e^{-\gamma t} + \frac{c}{\gamma - \theta} (e^{-\theta t} - e^{-\gamma t}) \right).
\end{equation}

As $\alpha^{-1}$ is also a class $\cal{K}$ function and thus continuous, it follows that in both cases, $\norm{x(t)} \rightarrow 0$ as $t \rightarrow \infty$ by \eqref{eq:thm_x_bound1} and \eqref{eq:thm_x_bound2}, i.e., the system \eqref{eq:control_affine_sys} is asymptotically stable.
\end{proof}

\subsection{Control Barrier Functions}
We say that the system \eqref{eq:control_affine_sys} is \textit{safe} if the true state $x(t)$ stays within the safe set $\cal{S}$ characterized by 
\begin{equation}
    \cal{S} = \{x \in \cal{X} : h(x) \geq 0 \}
\end{equation}
where $h: \cal{X} \cup \hat{\cal{X}} \rightarrow \R$ is a $\cal{C}^1$ function and $\operatorname{Int}(\cal{S}) \neq \varnothing$. When the knowledge of the full state $x$ is available, as commonly assumed in the existing literature \cite{ames2019control}, one tries to find a state-feedback controller that renders $\cal{S}$ forward-invariant, i.e., $x(0) \in \cal{S} \Rightarrow x(t) \in \cal{S}, \forall t \in \cal{I}$. However, in the setting of output-feedback control, we need to ensure the safety of the true state $x(t)$ using only $\hat{x}(t)$.

\begin{definition}[\cite{agrawal2022safe}]
An output-feedback controller $\pi: \cal{I} \times \hat{\cal{X}} \times \R^{n_z} \rightarrow \cal{U}$ renders system \eqref{eq:control_affine_sys} safe w.r.t. the set $\cal{S}$ if for the closed-loop system \eqref{eq:control_affine_sys_cl}
\begin{equation}
    x(0) \in \cal{X}_0 \text{ and } \hat{x}(0) \in \hat{\cal{X}}_0 \Rightarrow x(t) \in \cal{S}, \forall t \in \cal{I}
\end{equation} 
where $\cal{X}_0 \subset \cal{S}$ and $\hat{\cal{X}}_0 \subset \hat{\cal{X}}$ are sets of initial conditions for $x(t)$ and $\hat{x}(t)$, respectively.
\end{definition}

By Proposition \ref{prop:exp_stable}, we have $\norm{\zeta(t)} \leq M(t)$ with $M(t) = \eta \norm{\zeta(0)} e^{-\theta t}$ if  $\norm{\zeta(0)} \in B_\epsilon$. Then, it follows that
\begin{align*}
    \dot{h}(\hat{x}) &= L_f h(\hat{x}) + L_g h(\hat{x}) u \\
    &\quad + \nabla h(\hat{x})^\top S^{-1}(t) C^\top(t) R^{-1}[q(x)-q(\hat{x})]\\
    &\geq L_f h(\hat{x}) + L_g h(\hat{x}) u -  \underline{r}^{-1} \bar{p} K_h K_q^2 M(0)
\end{align*}
where $K_h$ is the Lipschitz constant of $h$ on $\cal{X} \cup \hat{\cal{X}}$, $\norm{C(t)} < K_q$, $\norm{S^{-1}(t)} \leq \bar{p}$, and $\norm{R^{-1}} < \underline{r}^{-1}$. Next, we introduce the definition of a CBF in the context of output feedback. 

\begin{definition}[Adapted from \cite{agrawal2022safe}]\label{def:cbf}
A class $\cal{C}^2$ function $h: \cal{X} \cup \hat{\cal{X}} \rightarrow \R$ is an observer based CBF for system \eqref{eq:control_affine_sys} with the observer \eqref{eq:observer_recall} of known estimation error bound \eqref{eq:exp_error} if there exists a constant $0< \alpha \leq \theta$ such that for all $x \in \cal{S}$
\begin{equation*}
    \sup_{u \in \cal{U}} \left (L_f h(x) + L_g h(x) u - \underline{r}^{-1} \bar{p} K_h K_q^2 M(0) \right) \geq -\alpha h(x).
\end{equation*} 
\end{definition}

For a given CBF $h$, consider the set
\begin{align}\label{eq:control_cbf}
    K_{\text{cbf}}&(t,\hat{x},z) = \{ u\in \cal{U}: L_f h(\hat{x}) + L_g h(\hat{x}) u + \alpha h(\hat{x}) \nonumber \\ 
    & +\nabla h(\hat{x})^\top S^{-1}(t) C^\top(t) R^{-1}[z-q(\hat{x})] \geq 0
    \}.
\end{align}
The next result shows that the controller $\pi(t,\hat{x},z) \in K_{\text{cbf}}(t,\hat{x},z)$ renders system \eqref{eq:control_affine_sys} safe if the initial conditions $x(0)$ is relatively far from the boundary of the safe set $\cal{S}$ and $\hat{x}(0)$ is close to $x(0)$. We assume that system \eqref{eq:control_affine_sys} is of relative degree one, i.e., $L_g h(x) \neq 0$ for $x \in \cal{X} \cup \hat{\cal{X}}$.

\begin{theorem}\label{thm:cbf}
Assume the conditions of Proposition \ref{prop:exp_stable} hold and a CBF $h$ exists as in Definition \ref{def:cbf}. For system \eqref{eq:control_affine_sys} with the observer \eqref{eq:observer_recall} of known estimation error bound \eqref{eq:exp_error}, if
\begin{align}
    x(0) \in \cal{X}_0 &= \{ x \in \cal{S}: h(x) \geq 2 K_h M(0)\}, \label{eq:x_initial} \\
    \hat{x}(0) \in \hat{\cal{X}}_0 &= \{ \hat{x} \in \hat{\cal{X}}: x(0) - \hat{x}(0) \in B_\epsilon \}, \label{eq:xhat_initial}
\end{align}
then any Lipschitz continuous controller $\pi(t,\hat{x},z) \in K_{\text{cbf}}(t,\hat{x},z)$ renders system \eqref{eq:control_affine_sys} safe w.r.t. the safe set $\cal{S}$.
\end{theorem}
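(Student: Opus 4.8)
The plan is to establish forward invariance of $\cal{S}$ by running a comparison-lemma estimate on the barrier $h$ evaluated along the \emph{estimated} trajectory $\hat{x}(t)$ and then converting that bound into one for the true trajectory $x(t)$ using the Lipschitz constant $K_h$ and the exponential error bound of Proposition~\ref{prop:exp_stable}.

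First, since the hypotheses of Proposition~\ref{prop:exp_stable} hold and $\zeta(0)=x(0)-\hat{x}(0)\in B_\epsilon$ by \eqref{eq:xhat_initial}, the bound \eqref{eq:exp_error} gives $\norm{\zeta(t)}\le M(t)=\eta\norm{\zeta(0)}e^{-\theta t}$ for all $t\in\cal{I}$; in particular $\norm{\zeta(0)}\le M(0)$ and, since $\theta>0$, $M(t)\le M(0)$. Next I would differentiate $h$ along the observer dynamics \eqref{eq:observer_recall}: using $z=q(x)$,
\[
\dot{h}(\hat{x}(t))=L_f h(\hat{x})+L_g h(\hat{x})u+\nabla h(\hat{x})^\top S^{-1}C^\top R^{-1}[z-q(\hat{x})].
\]
Because $u=\pi(t,\hat{x},z)\in K_{\text{cbf}}(t,\hat{x},z)$, the defining inequality of $K_{\text{cbf}}$ in \eqref{eq:control_cbf} yields immediately $\dot{h}(\hat{x}(t))\ge-\alpha\,h(\hat{x}(t))$. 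The key structural point is that the observer innovation term $\nabla h(\hat{x})^\top S^{-1}C^\top R^{-1}[z-q(\hat{x})]$ appears verbatim inside $K_{\text{cbf}}$, so no conservative over-bound is incurred at this step (in contrast with the CLF theorem). The Comparison Lemma \cite[Lemma B.2]{khalil2015nonlinear} then gives $h(\hat{x}(t))\ge h(\hat{x}(0))\,e^{-\alpha t}$ for all $t\in\cal{I}$.

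Second, I would lower-bound the initial value using $K_h$-Lipschitzness of $h$ and $x(0)\in\cal{X}_0$ from \eqref{eq:x_initial}: $h(\hat{x}(0))\ge h(x(0))-K_h\norm{\zeta(0)}\ge 2K_hM(0)-K_hM(0)=K_hM(0)\ge 0$, hence $h(\hat{x}(t))\ge K_hM(0)\,e^{-\alpha t}$. Finally, transferring back to the true state and using $\norm{\zeta(t)}\le M(t)$ and $\alpha\le\theta$ (Definition~\ref{def:cbf}),
\[
h(x(t))\ge h(\hat{x}(t))-K_h M(t)\ge K_hM(0)\left(e^{-\alpha t}-e^{-\theta t}\right)\ge 0,
\]
so $x(t)\in\cal{S}$ for every $t\in\cal{I}$, i.e., $\pi$ renders system \eqref{eq:control_affine_sys} safe with respect to $\cal{S}$.

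I expect the only delicate points to be bookkeeping rather than conceptual: keeping $x(t)\in\cal{X}$ and $\hat{x}(t)\in\hat{\cal{X}}$ along the whole trajectory so that $K_h$, $K_q$, and Assumption~\ref{ass:ass_p} stay in force (covered by the standing compactness hypotheses), and checking that the inequality chain is valid on all of $\cal{I}=[0,t_{\text{max}})$. The one genuine idea is the two-sided use of $K_h$ — from truth to estimate at $t=0$ and from estimate to truth at time $t$ — together with letting the residual term $e^{-\theta t}$ be dominated by $e^{-\alpha t}$, which is precisely what the constraint $\alpha\le\theta$ in Definition~\ref{def:cbf} is there to guarantee; I do not anticipate any hard step beyond this. (Nonemptiness of $K_{\text{cbf}}$, i.e., existence of such a $\pi$, follows separately from Definition~\ref{def:cbf} evaluated at $\hat{x}$ and is not needed for this argument.)
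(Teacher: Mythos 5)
Your proof is correct and follows essentially the same route as the paper: a comparison-lemma bound on $h(\hat{x}(t))$ driven by the $K_{\text{cbf}}$ constraint, combined with the two Lipschitz transfers via $K_h$ and the condition $\alpha\le\theta$. The only (immaterial) difference is bookkeeping — the paper packages the argument via the auxiliary function $H(t)=h(\hat{x}(t))-K_hM(t)$ and shows $\dot{H}\ge-\alpha H$, whereas you apply the comparison lemma to $h(\hat{x}(t))$ directly and subtract $K_hM(t)$ at the end, which in fact yields a marginally tighter lower bound.
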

\begin{proof}
Since $\hat{x}(0) \in \hat{\cal{X}}_0$, if follows that $h(x(t)) \geq h(\hat{x}(t)) - K_h M(t) \coloneqq H(t)$. Given that $x(0) \in \cal{X}_0$, we have
\begin{equation*}
    H(0) = h(\hat{x}(0)) - K_h M(0) \geq  h(x(0)) - 2 K_h M(0) \geq 0.
\end{equation*}
Since $\pi(t,\hat{x},z) \in K_{\text{cbf}}(t,\hat{x},z)$, we have
\begin{align*}
    \dot{H} &= L_f h(\hat{x}) + L_g h(\hat{x}) u - K_h \dot{M}(t)\\
    &\quad \quad+ \nabla h(\hat{x})^\top S^{-1}(t) C^\top(t) R^{-1}[z-q(\hat{x})]\\
    &\geq -\alpha h - K_h \dot{M}(t) = -\alpha H - \alpha K_h M(t) - K_h \dot{M}(t).
\end{align*}
Noting that $M(t) = \eta \norm{\zeta(0)} e^{-\theta t}$ and $0< \alpha \leq \theta$, we have $- \alpha K_h M(t) - K_h \dot{M}(t) \geq 0$ and thus $\dot{H} \geq -\alpha H$. Given that $H(0) \geq 0$ and $h(x(t)) \geq H(t)$, we have $h(x(t)) \geq H(t) \geq 0$, i.e., system \eqref{eq:control_affine_sys} is safe w.r.t. the safe set $\cal{S}$.
\end{proof}

In this work, by $\alpha \leq \theta$, we require that the observer have a faster convergence than the CBF, and this aligns with the commonly accepted principle that the observer should always converge faster than the controller \cite{agrawal2022safe}. The benefit of this control design in \eqref{eq:control_cbf} is that it does not require explicitly calculating $M(t)$. If one explicitly knows $M(t)$, another way to design a safe controller is presented in \cite{agrawal2022safe}.

\section{Confidence Optimization}\label{sec:conf_opt}
As introduced in Section \ref{sec:ekf}, $P(t)$ is analogous to the covariance of the state in the probabilistic setting. If we can optimize some metric of $P(t)$ by selecting proper control inputs $u$, we can speed up the convergence of the observer and thus improve the performance of the feedback controller. 

Recall that the \textit{confidence} matrix is defined by $S(t) = P^{-1}(t)$. We choose $\lambda_{\text{min}}(S(t))$ as the optimization metric, where $\lambda_{\text{min}}(\cdot)$ denotes the minimal eigenvalue of a square matrix. If $\lambda_{\text{min}}(S(t))$ can be increased, then we increase the convergence rate for the slowest mode of the observer. Let $\Delta t$ be the time difference between two consecutive control inputs. At time $t$, we generate control inputs to maximize $\lambda_{\text{min}}(S(t+\Delta t))$, i.e., to optimize $\lambda_{\text{min}}(S(t))$ at one step into the future. In addition, as $S(t)$ satisfies the Riccati equation \eqref{eq:riccati_s}, $S(t+\Delta t)$ can be approximated by 
\begin{align}\label{eq:S_next}
    S(t+\Delta t) &\approx S(t) + \Delta t [-\kappa S(t) - A(t)^\top S(t) - S(t)A(t) \nonumber\\
    & + C(t)^\top R^{-1} C(t) - S(t)QS(t)]
\end{align}
using the first-order approximation. 

\begin{assumption}\label{ass:eig_diff}
    All eigenvalues of $P(t)$ (or equivalently, $S(t)$) are distinct.
\end{assumption}

Assumption \ref{ass:eig_diff} is a practical assumption to guarantee the well-posedness of the problem. In general, for a real symmetric matrix $X$, $\lambda_{\text{min}}(X)$ is a concave function of $X$ and is also Lipschitz continuous in $X$(see \cite[Ch. 2]{wilkinson1971algebraic}). If Assumption \ref{ass:eig_diff} holds, we further obtain twice differentiability of $\lambda_{\text{min}}(X)$ w.r.t. $X$ \cite{overton1995second}. It is also worth noting that the set of positive definite matrices with distinct eigenvalues is dense in the set of all positive definite matrices, as we can always do small perturbations to the entries of a positive definite matrix with repeated eigenvalues to obtain a positive definite matrix with distinct eigenvalues.

\begin{remark}
    Apart from the minimum eigenvalue (E-Optimality), other metrics, such as the condition number, the trace (A-Optimality), and the determinant (D-Optimality), can be equally employed as optimization objectives in this work. We opt for the minimum eigenvalue as it specifically addresses the slowest mode of the estimation error.
\end{remark}

\subsection{Combining CLFs and CBFs via Convex Optimization}
We are ready to formulate the confidence-aware safe and stable control as a constrained optimization problem. The main benefit of optimization-based control is that it allows us to optimize a certain performance objective subject to both stability and safety requirements. More specifically, given a CLF $V$ (Definition \ref{def:clf}) and a CBF $h$ (Definition \ref{def:cbf}) associated with a safe set $\cal{S}$, they can be incorporated into finding a single controller $\pi$ that can optimize the confidence of the observer through the optimization problem \eqref{eq:P1a}:
\begin{equation}\label{eq:P1a}
\begin{aligned}
\pi(t, \hat{x}&, z) = \argmin_{u \in \R^{n_u}} \  u^\top u -c_1 \lambda_{\text{min}}(S(t+\Delta t)) + c_2 \delta^2 \\
\textrm{s.t.} \quad & L_f V(\hat{x}) + L_g V(\hat{x}) u + \gamma V(\hat{x}) \leq \delta, \\
  & L_f h(\hat{x}) + L_g h(\hat{x}) u + \alpha h(\hat{x}) \\
  & \qquad +\nabla h(\hat{x})^\top S^{-1}(t) C^\top(t) R^{-1}[z-q(\hat{x})] \geq 0  \\
\end{aligned}
\tag{P1}
\end{equation}
where $c_1 \geq 0$, $c_2 >0$, and $\delta \in \R$ is a relaxation variable. The CLF is taken as a soft constraint as in \cite{ames2019control}, while the CBF is taken as a hard constraint. This is a convex problem because $-\lambda_{\text{min}}(S(t+\Delta t))$ is convex w.r.t. $S(t+\Delta t)$ and $S(t+\Delta t)$ has affine dependence on $u$. The following result proves the Lipschitz continuity and safety of the controller $\pi$ given by the optimization problem \eqref{eq:P1a}.

\begin{theorem}\label{thm:p1a}
    Consider system \eqref{eq:control_affine_sys} and observer \eqref{eq:observer_recall} of a known error bound \eqref{eq:exp_error}. Suppose that the Assumptions \ref{ass:ass_p} and \ref{ass:eig_diff} and conditions of Proposition \ref{prop:exp_stable} hold, $V$ is a CLF, and $h$ is a CBF associated with the safe set $\cal{S}$. If the initial conditions $x(0)$ and $\hat{x}(0)$ satisfy \eqref{eq:x_initial} and \eqref{eq:xhat_initial}, then the controller $\pi: \cal{I} \times \hat{\cal{X}} \times \R^{n_z} \rightarrow \R^{n_u}$ given by \eqref{eq:P1a} renders system \eqref{eq:control_affine_sys} safe and is piecewise continuous w.r.t. $t$ and Lipschitz continuous w.r.t. $\hat{x}$ and $z$.
\end{theorem}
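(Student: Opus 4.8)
The plan is to split the claim into three parts---feasibility and well-posedness of \eqref{eq:P1a}, safety, and Lipschitz/piecewise-continuous dependence of the minimizer on the parameters---of which the third is the substantive one. First I would check feasibility for every triple $(t,\hat{x},z)$ on the relevant domain: by the relative-degree-one hypothesis $L_g h(\hat{x})\neq 0$, the CBF inequality in \eqref{eq:P1a} describes a nonempty closed half-space in $u\in\R^{n_u}$, and for any $u$ in it the CLF inequality is met by taking $\delta$ no smaller than $L_f V(\hat{x})+L_g V(\hat{x})u+\gamma V(\hat{x})$, so \eqref{eq:P1a} is always feasible. Next I would note that the objective is strictly convex and coercive in $(u,\delta)$: the terms $u^\top u$ and $c_2\delta^2$ are strictly convex, while $-c_1\lambda_{\text{min}}(S(t+\Delta t))$ is convex in $u$ since $\lambda_{\text{min}}(\cdot)$ is concave on symmetric matrices and, by \eqref{eq:S_next} with \eqref{eq:partials_affine}, $S(t+\Delta t)$ is affine in $u$ (only $A(t)$ carries a $u$, linearly); coercivity holds because $\lambda_{\text{min}}(S(t+\Delta t))$ grows at most linearly in $\norm{u}$ whereas $u^\top u$ grows quadratically. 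Hence the minimizer $(\pi,\delta)$ exists and is unique for each parameter value, and comparing the optimal value against that attained at an explicit feasible point whose data vary boundedly over the compact sets $\hat{\cal{X}}$ and $\cal{X}$ yields a bound on $\norm{\pi(t,\hat{x},z)}$ uniform in the parameters, so the decision variable may be confined to a fixed compact set $\cal{B}\subset\R^{n_u}\times\R$ for the rest of the argument.

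Safety is then short: since the CBF inequality is a hard constraint, $\pi(t,\hat{x},z)$ lies in the set $K_{\text{cbf}}(t,\hat{x},z)$ of \eqref{eq:control_cbf}, so given the initial conditions \eqref{eq:x_initial}--\eqref{eq:xhat_initial} and $0<\alpha\leq\theta$, Theorem \ref{thm:cbf} delivers safety of \eqref{eq:control_affine_sys} with respect to $\cal{S}$ --- provided $\pi$ is regular enough for the closed loop \eqref{eq:control_affine_sys_cl} to admit (Carath\'eodory) solutions. This is exactly what the regularity step supplies: $\pi$ Lipschitz in $(\hat{x},z)$ and piecewise continuous in $t$, composed with the locally Lipschitz $q$, gives a right-hand side that is locally Lipschitz in the state and piecewise continuous in $t$.

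The main step is the regularity of the minimizer, which I would obtain by treating \eqref{eq:P1a} as a parametric convex program $\min_{w\in\cal{B}} J(w;\xi)$ subject to two inequality constraints affine in $w=(u,\delta)$, with parameter $\xi=(t,\hat{x},z)$. Near any fixed $\xi_0$, the eigenvalues of $S(t+\Delta t)$ depend continuously on $(w,\xi)$, so by Assumption \ref{ass:eig_diff} their distinctness persists on a neighborhood and, by \cite{overton1995second}, $w\mapsto\lambda_{\text{min}}(S(t+\Delta t))$ is $\cal{C}^2$ there; consequently $J$ is strongly convex in $w$ with a $\xi$-independent modulus (from $u^\top u+c_2\delta^2$), has a locally Lipschitz $w$-gradient, and the constraint data are locally Lipschitz in $\xi$ (being compositions of the $\cal{C}^2$ maps $f,g,h,V,q$ and of $S^{-1}(t)C^\top(t)R^{-1}$, with $S(t)=P^{-1}(t)$ absolutely continuous in $t$ as the solution of \eqref{eq:riccati_p}). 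The Mangasarian--Fromovitz constraint qualification holds at every feasible point --- $L_g h(\hat{x})\neq 0$ supplies a strictly CBF-increasing direction in $u$, and $\delta$ can always be pushed strictly above the CLF threshold --- so a standard sensitivity/stability result for strongly convex parametric programs with affine constraints under MFCQ (of the kind used for CBF-based quadratic programs, e.g.\ in \cite{ames2019control}) gives that the unique minimizer $w^\star(\xi)$, in particular $\pi$, is locally Lipschitz in $(\hat{x},z)$; and since $\xi$ enters the problem through $t$ only via the continuous map $t\mapsto P(t)$, $\pi$ is continuous in $t$ wherever the applied control is, hence piecewise continuous in $t$. Patching these local estimates over the compact domain and combining with the feasibility and safety steps closes the proof.

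The hard part is this last step, for two related reasons. First, \eqref{eq:P1a} is not a quadratic program, because of the spectral term $-c_1\lambda_{\text{min}}(S(t+\Delta t))$, so the classical Lipschitz arguments for QP controllers do not transfer verbatim; Assumption \ref{ass:eig_diff} is precisely what restores enough smoothness of $\lambda_{\text{min}}$ (via \cite{overton1995second}) to run a sensitivity argument, and one must be careful that this smoothness, together with only $\cal{C}^2$-smoothness of $f,g,h,V,q$, yields a locally Lipschitz $w$-gradient of $J$ rather than a $\cal{C}^2$ objective --- which is all the sensitivity theorem needs, but which has to be stated correctly. Second, one must select a parametric-optimization stability theorem general enough to cover a strongly convex non-quadratic objective with affine, parameter-dependent constraints; alternatively the same conclusion follows by linearizing the KKT system of \eqref{eq:P1a} and invoking an implicit-function / strong-regularity argument, which is conceptually cleaner but computationally heavier. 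Everything else --- feasibility, uniqueness, the uniform bound on $\norm{\pi}$, and the reduction to Theorem \ref{thm:cbf} for safety --- is routine.
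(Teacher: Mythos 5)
Your proposal is correct and follows essentially the same route as the paper: feasibility from the structure of the two affine constraints (using $L_g h \neq 0$), uniqueness from strong convexity of the objective, Lipschitz/piecewise-continuous dependence via a parametric-sensitivity theorem for the strongly convex program with twice-differentiable objective (the paper invokes Hager's result under linear independence of the two constraint gradients, which holds here and is stronger than the MFCQ you appeal to), and safety by reduction to Theorem~\ref{thm:cbf}. The paper's proof is considerably terser---it omits the coercivity, eigenvalue-persistence, and uniform-bound discussion you include---but the logical skeleton is identical.
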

\begin{proof}
    We first prove the existence and uniqueness of the solution to \eqref{eq:P1a}. Let $a_1(\hat{x}) = L_g V(\hat{x})$, $b_1(\hat{x}) = -L_f V(\hat{x}) - \gamma V(\hat{x})$, $a_2(\hat{x}) = -L_g h(\hat{x})$, and $b_2(t, \hat{x}, z) = L_f h(\hat{x}) + \alpha h(\hat{x}) + \nabla h(\hat{x})^\top S^{-1} C^\top R^{-1}[z-q(\hat{x})]$. We omit their dependencies in the following and use $a_1, a_2, b_1$, and $b_2$ for brevity. The constraints in \eqref{eq:P1a} can be written as $T [u^\top, \delta]^\top \leq [b_1, b_2]^\top$ with $T = [a_1, -1; a_2, 0]$ and we see that the rows of $T$ are linearly independent. As there are $n_u + 1$ (with $n_u \geq 1$) decision variables and two linearly independent constraints, the problem is feasible. Since the objective function is strongly convex, there exists one unique minimizer to \eqref{eq:P1a}.

    Then, we prove the Lipschitz continuity of $\pi$. As the objective function of \eqref{eq:P1a} is twice differentiable and strongly convex, its Hessian is positive definite. As the constraints are linearly independent, the regularity conditions of \cite[Thm. D.1]{hager1979lipschitz} are met. Therefore, $\pi$ is Lipschitz continuous w.r.t. the data $a_1, a_2, b_1, b_2, A, C$ and $S(t)$. As the state-dependent data are all Lipschitz continuous w.r.t. $\hat{x}$\footnote{This results from the boundedness of $\cal{X}$ and $\hat{\cal{X}}$ and the $\cal{C}^2$-smoothness of $f,g,q,h,$ and $V$.}, $b_2$ and $S(t)$ are piecewise continuous in $t$, and $b_2$ is Lipschitz continuous in $z$, we see that $\pi$ is piecewise continuous w.r.t. $t$ and Lipschitz continuous w.r.t. $\hat{x}$ and $z$. Finally, $\pi$ renders system \eqref{eq:control_affine_sys} safe w.r.t. $\cal{S}$ because \eqref{eq:control_cbf} holds as a hard constraint in \eqref{eq:P1a} and the conditions of Theorem \ref{thm:cbf} are met.
\end{proof}

\subsection{Tracking a Nominal Controller}
In some cases, we may already have a nominal output-feedback controller $\pi_n$ and would like to optimize the confidence of the observer while guaranteeing safety. In this case, we can consider the following problem \eqref{eq:P2}:
\begin{equation}\label{eq:P2}
\begin{aligned}
\pi(t, \hat{x}&, z) = \argmin_{u \in \R^{n_u}} \ \norm{u-\pi_n (\hat{x})}^2 -c_1 \lambda_{\text{min}}(S(t+\Delta t))  \\
\textrm{s.t.} \quad 
  & L_f h(\hat{x}) + L_g h(\hat{x}) u + \alpha h(\hat{x}) \\
  & \qquad +\nabla h(\hat{x})^\top S^{-1}(t) C^\top(t) R^{-1}[z-q(\hat{x})] \geq 0  \\
\end{aligned}
\tag{P2}
\end{equation}
where the CBF is incorporated as a hard constraint, and the objective function is a weighted sum of the tracking error and the cost on the smallest eigenvalue of $S(t+\Delta t)$. If $c_1 = 0$, we recover the CBF-QP as in \cite{ames2019control}.

\begin{theorem}\label{thm:p2}
    Consider system \eqref{eq:control_affine_sys} and observer \eqref{eq:observer_recall} of a known error bound \eqref{eq:exp_error}. Suppose that the Assumptions \ref{ass:ass_p} and \ref{ass:eig_diff} and conditions of Proposition \ref{prop:exp_stable} hold, and $h$ is a CBF associated with the safe set $\cal{S}$. If the initial conditions $x(0)$ and $\hat{x}(0)$ satisfy \eqref{eq:x_initial} and \eqref{eq:xhat_initial}, and the nominal controller $\pi_n$ is Lipschitz continuous w.r.t. its argument, then the controller $\pi: \cal{I} \times \hat{\cal{X}} \times \R^{n_z} \rightarrow \R^{n_u}$ given by \eqref{eq:P2} renders system \eqref{eq:control_affine_sys} safe and is piecewise continuous w.r.t. $t$ and Lipschitz continuous w.r.t. $\hat{x}$ and $z$.
\end{theorem}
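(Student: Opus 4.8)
The plan is to mirror the proof of Theorem \ref{thm:p1a}, exploiting that \eqref{eq:P2} has the same structure as \eqref{eq:P1a} but with the soft CLF constraint removed and the penalty $u^\top u$ replaced by the tracking cost $\norm{u-\pi_n(\hat{x})}^2$. Accordingly I would argue in three steps: (i) existence and uniqueness of the minimizer, (ii) the regularity/continuity properties, and (iii) safety.

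First I would establish existence and uniqueness. Writing $a_2(\hat{x}) = -L_g h(\hat{x})$ and $b_2(t,\hat{x},z) = L_f h(\hat{x}) + \alpha h(\hat{x}) + \nabla h(\hat{x})^\top S^{-1}(t) C^\top(t) R^{-1}[z-q(\hat{x})]$, the single constraint of \eqref{eq:P2} reads $a_2 u \leq b_2$; by the relative-degree-one assumption $L_g h(\hat{x}) \neq 0$, so $a_2 \neq 0$ and the feasible set is a nonempty closed half-space in $\R^{n_u}$ (with $n_u \geq 1$). The objective is strongly convex: $\norm{u-\pi_n(\hat{x})}^2$ contributes the constant Hessian $2I$, while $-c_1 \lambda_{\text{min}}(S(t+\Delta t))$ is convex in $u$ because $S(t+\Delta t)$ is \emph{affine} in $u$ (through $A(t)$ in \eqref{eq:partials_affine} and the update \eqref{eq:S_next}) and $\lambda_{\text{min}}(\cdot)$ is concave on the symmetric matrices. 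Minimizing a strongly convex function over a nonempty closed convex set yields a unique minimizer, so $\pi$ is well defined.

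Next I would prove the regularity properties by invoking \cite[Thm. D.1]{hager1979lipschitz}, exactly as in Theorem \ref{thm:p1a}. Under Assumption \ref{ass:eig_diff} the minimum eigenvalue is simple, hence $\lambda_{\text{min}}$ is twice continuously differentiable at $S(t+\Delta t)$ \cite{overton1995second}; composed with the affine map $u \mapsto S(t+\Delta t)$, the objective of \eqref{eq:P2} is $\cal{C}^2$ in $u$, and its Hessian is uniformly positive definite (it dominates $2I$ independently of the parameters). The constraint gradient $a_2 \neq 0$ gives the linear independence constraint qualification. These are precisely the hypotheses of \cite[Thm. D.1]{hager1979lipschitz}, which yields that $\pi$ is Lipschitz continuous w.r.t. the problem data $\pi_n(\hat{x})$, $a_2(\hat{x})$, $b_2(t,\hat{x},z)$, $A(t)$, $C(t)$, and $S(t)$. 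Since $\pi_n$ is assumed Lipschitz in its argument, $f,g,q,h$ are $\cal{C}^2$ on the compact sets $\cal{X}$ and $\hat{\cal{X}}$ (so the state-dependent data are Lipschitz there), $S(t)$ is bounded by Assumption \ref{ass:ass_p} and, together with $b_2$, piecewise continuous in $t$, and $b_2$ is affine — hence Lipschitz — in $z$, composing these dependences shows $\pi$ is piecewise continuous in $t$ and Lipschitz in $\hat{x}$ and $z$. Finally, safety follows as in Theorem \ref{thm:p1a}: the CBF inequality \eqref{eq:control_cbf} appears as a hard constraint in \eqref{eq:P2}, so $\pi(t,\hat{x},z) \in K_{\text{cbf}}(t,\hat{x},z)$ for all arguments; combined with the initial-condition hypotheses \eqref{eq:x_initial}--\eqref{eq:xhat_initial} and the conditions of Proposition \ref{prop:exp_stable}, Theorem \ref{thm:cbf} gives that $x(t)$ remains in $\cal{S}$.

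I expect the main obstacle to be the bookkeeping in the second step: checking that every quantity fed as ``data'' into \cite[Thm. D.1]{hager1979lipschitz} is genuinely Lipschitz (or piecewise continuous in $t$) with constants uniform over the relevant compact sets, and in particular confirming that the $u$-dependence of $S(t+\Delta t)$ inherited through $A(t) = \tfrac{\partial f}{\partial x}(\hat{x}) + \tfrac{\partial g}{\partial x}(\hat{x})u$ is exactly affine, so that both convexity of \eqref{eq:P2} and the twice-differentiability required by Hager's result are preserved. The feasibility and safety parts are essentially immediate consequences of the relative-degree-one assumption and Theorem \ref{thm:cbf}, respectively.
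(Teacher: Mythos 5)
Your proposal is correct and follows essentially the same route as the paper's own proof: feasibility plus strong convexity of the objective for existence and uniqueness, the regularity conditions of \cite[Thm.~D.1]{hager1979lipschitz} for Lipschitz continuity in the problem data (and hence in $\hat{x}$ and $z$, with piecewise continuity in $t$), and the hard CBF constraint together with Theorem~\ref{thm:cbf} for safety. If anything, you are slightly more careful than the paper on the feasibility step, since you explicitly invoke the relative-degree-one assumption $L_g h(\hat{x}) \neq 0$ to guarantee the single constraint defines a nonempty half-space, whereas the paper only counts decision variables against constraints.
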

\begin{proof}
There are $n_u \geq 1$ decision variables and one constraint, so the problem is feasible. Since the objective function is strongly convex, there exists a unique minimizer to the problem. Denote $a_1(\hat{x}) = -L_g h(\hat{x})$ and $b_1(t, \hat{x}, z) = L_f h(\hat{x}) + \alpha h(\hat{x}) + \nabla h(\hat{x})^\top S^{-1}(t) C^\top(t) R^{-1}[z-q(\hat{x})]$. Similarly, $\pi$ is Lipschitz continuous w.r.t. the data $a_1, \allowbreak b_1, \allowbreak A, \allowbreak C, \allowbreak S(t)$, and $\pi_n$. Note that the nominal controller $\pi_n$ is Lipschitz continuous w.r.t. $\hat{x}$. In addition, as the previous arguments in the proof of Theorem \ref{thm:p1a} still hold, $\pi$ is piecewise continuous w.r.t. $t$ and Lipschitz continuous w.r.t. $\hat{x}$ and $z$. The rest of the proof is identical to that of Theorem \ref{thm:p1a}.
\end{proof}

\begin{figure*}[t]
    \centering
    \begin{subfigure}[b]{0.195\textwidth}
        \centering
        \includegraphics[width=\textwidth]{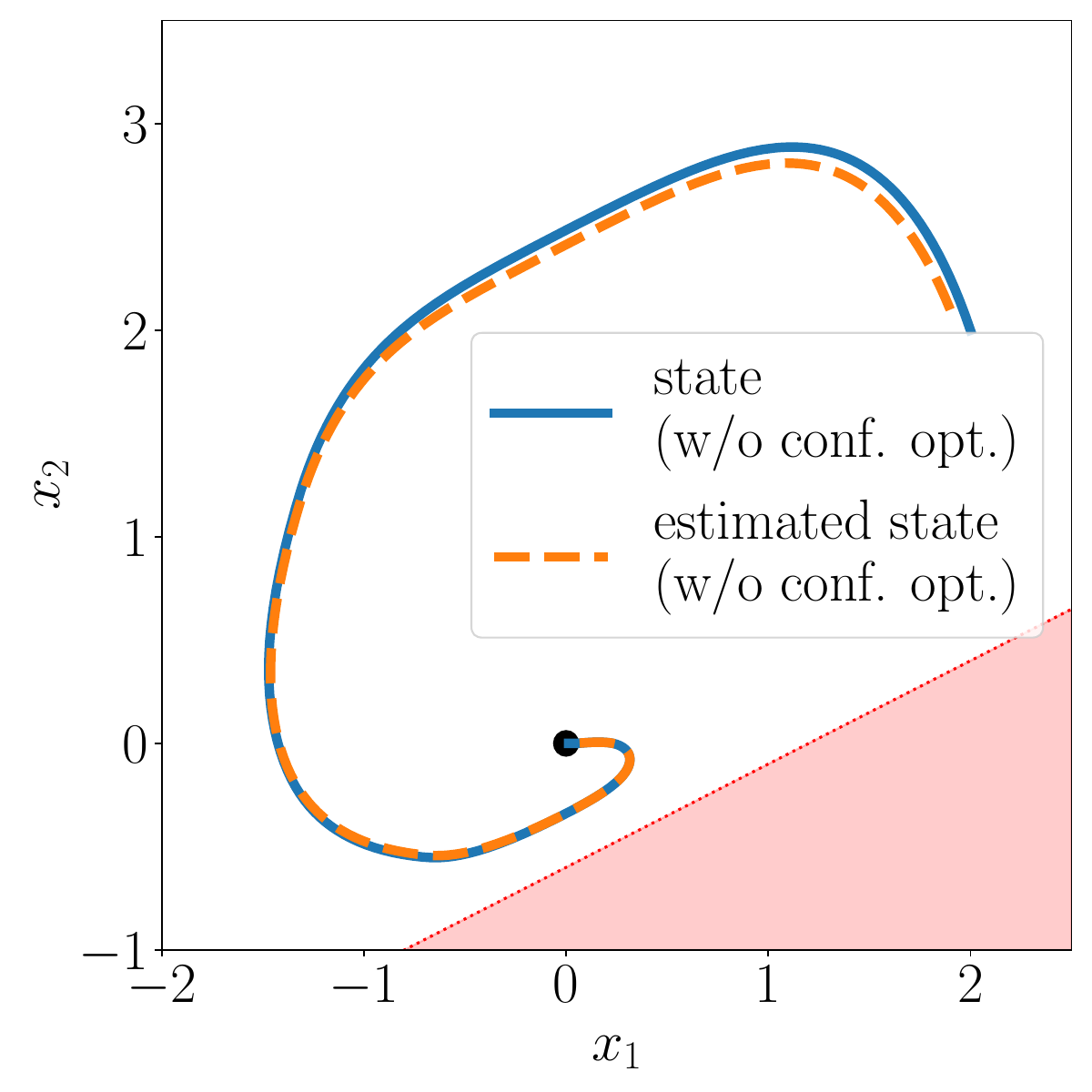}
        \caption{Trajectory ($c_1 = 0$).}
        \label{fig:p1a_traj_wo_opt}
    \end{subfigure}
    \begin{subfigure}[b]{0.195\textwidth}
        \centering
        \includegraphics[width=\textwidth]{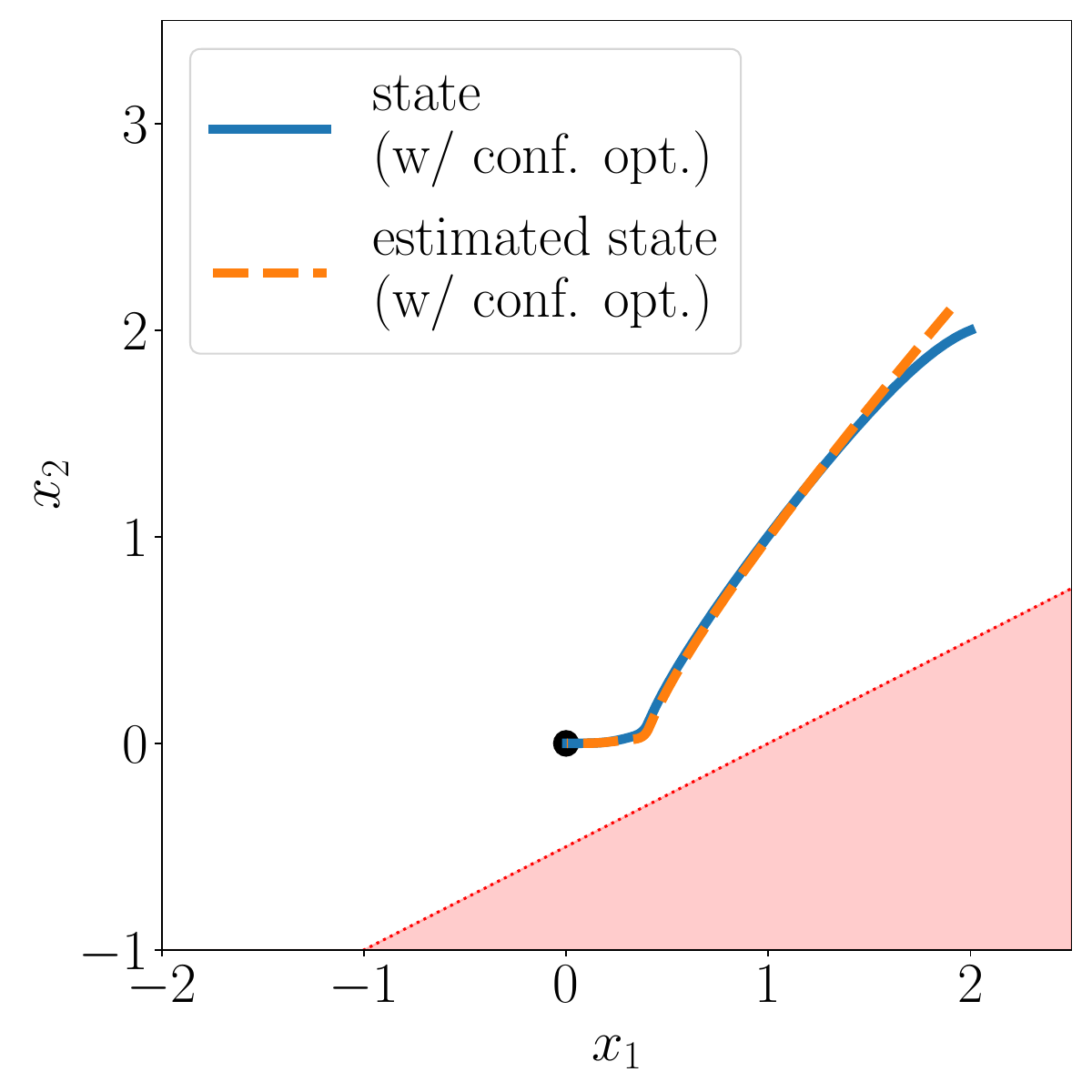}
        \caption{Trajectory ($c_1 = 1000$).}
        \label{fig:p1a_traj_w_opt}
    \end{subfigure}
    \begin{subfigure}[b]{0.19\textwidth}
        \centering
        \includegraphics[width=\textwidth]{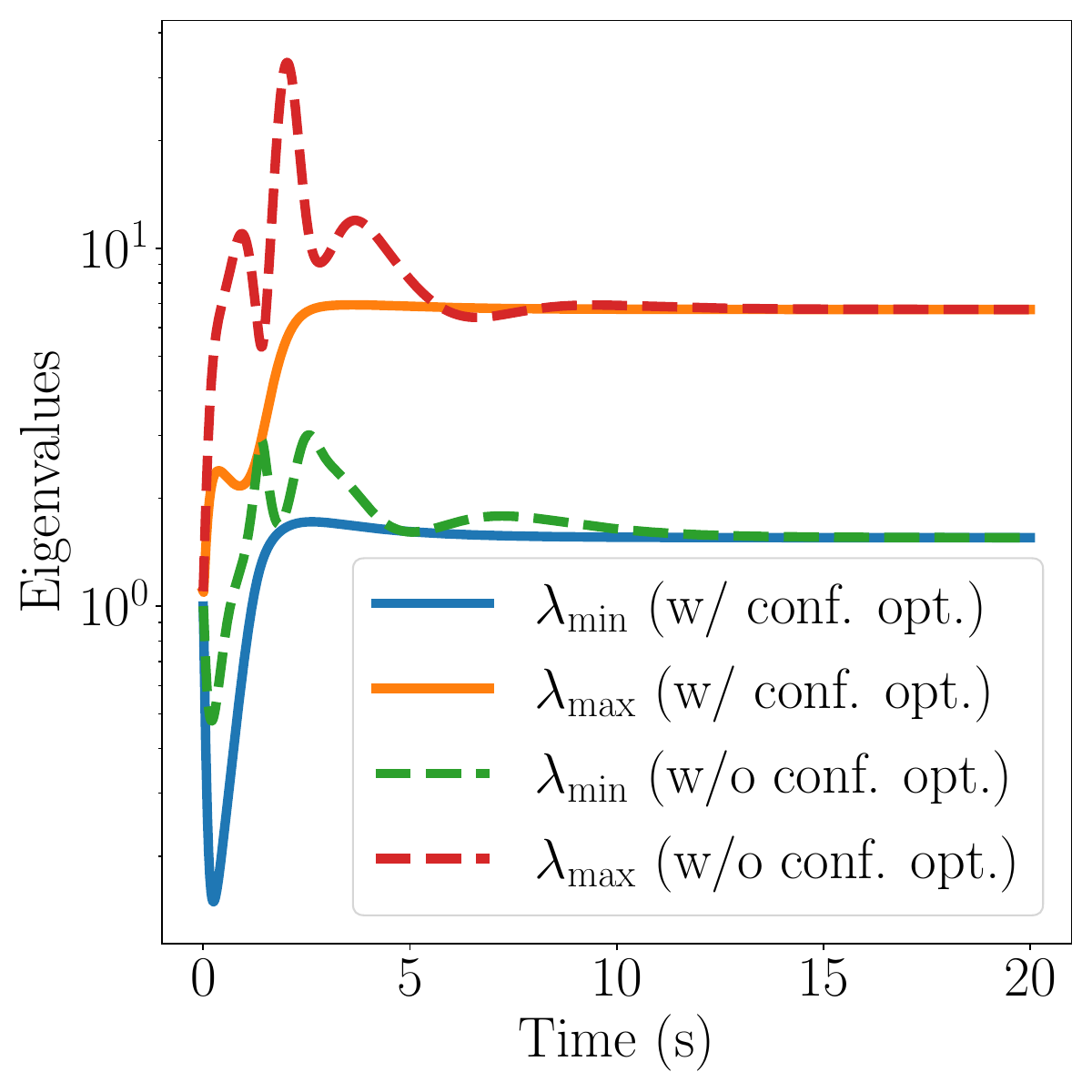}
        \caption{Eigenvalues of $P(t)$.}
        \label{fig:p1a_eigenvalues}
    \end{subfigure}
    \begin{subfigure}[b]{0.19\textwidth}
        \centering
        \includegraphics[width=\textwidth]{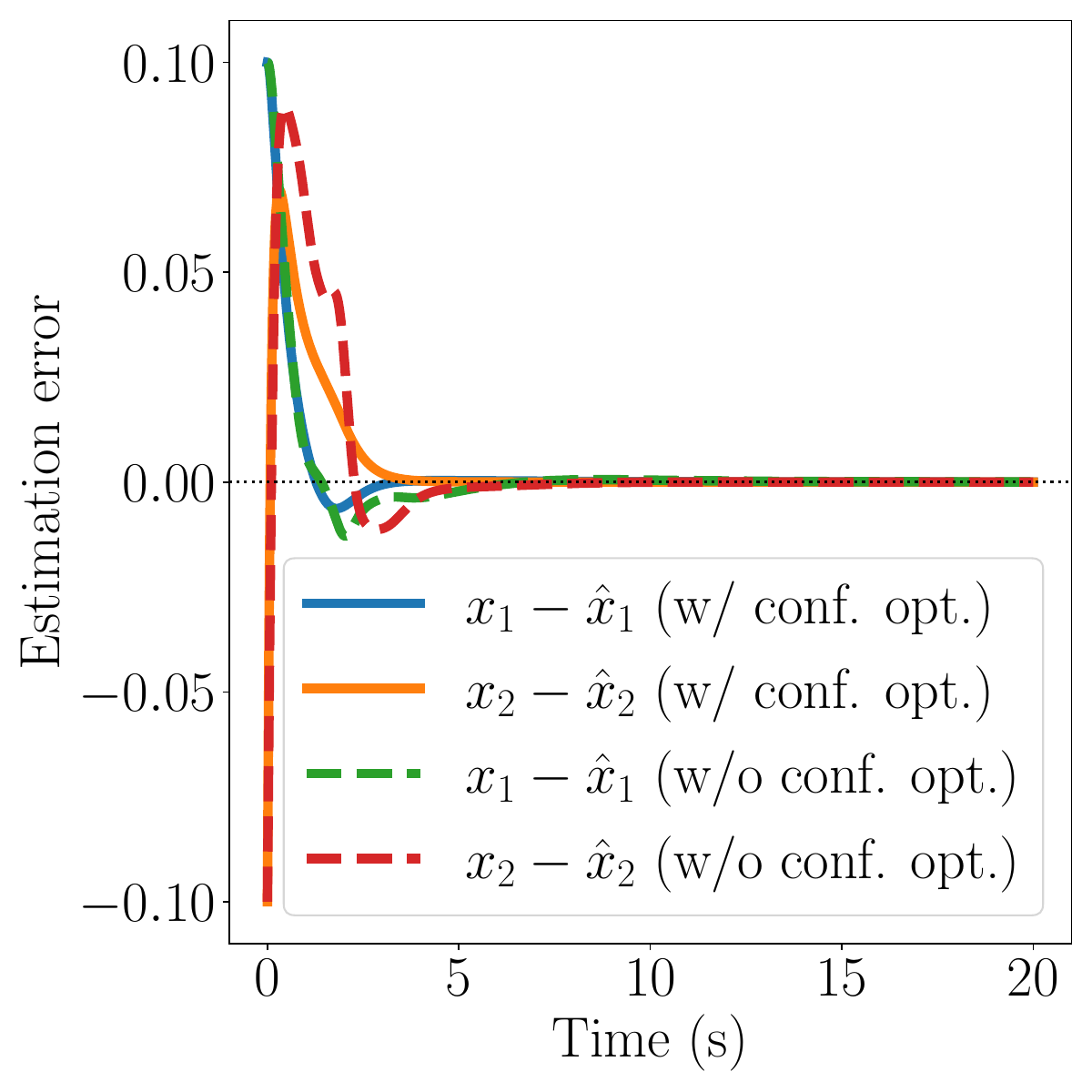}
        \caption{State estimation error.}
        \label{fig:p1a_ekf_error}
    \end{subfigure}
    \begin{subfigure}[b]{0.19\textwidth}
        \centering
        \includegraphics[width=\textwidth]{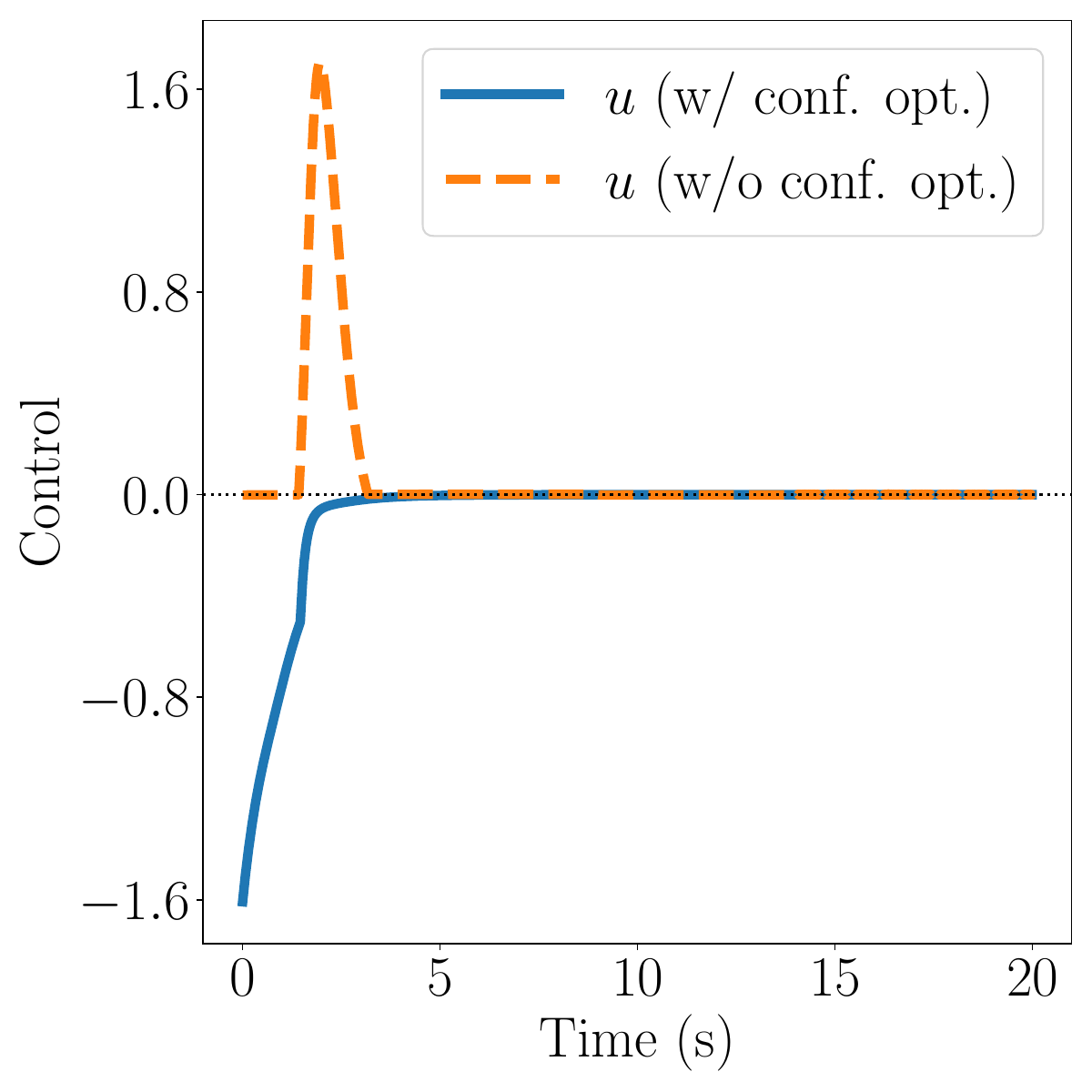}
        \caption{Controls.}
        \label{fig:p1a_control}
    \end{subfigure}
    
    \caption{A second-order system stabilization problem. \subref{fig:p1a_traj_wo_opt} and \subref{fig:p1a_traj_w_opt}: Trajectories with and without confidence optimization. \subref{fig:p1a_eigenvalues}-\subref{fig:p1a_control}: Comparison of the eigenvalues (of $P(t)$), the state estimation error, and the control inputs, respectively. 
    }
    \label{fig:p1a}
\end{figure*}

\begin{figure*}[t]
    \centering
    \begin{subfigure}[b]{0.19\textwidth}
        \centering
        \includegraphics[width=\textwidth]{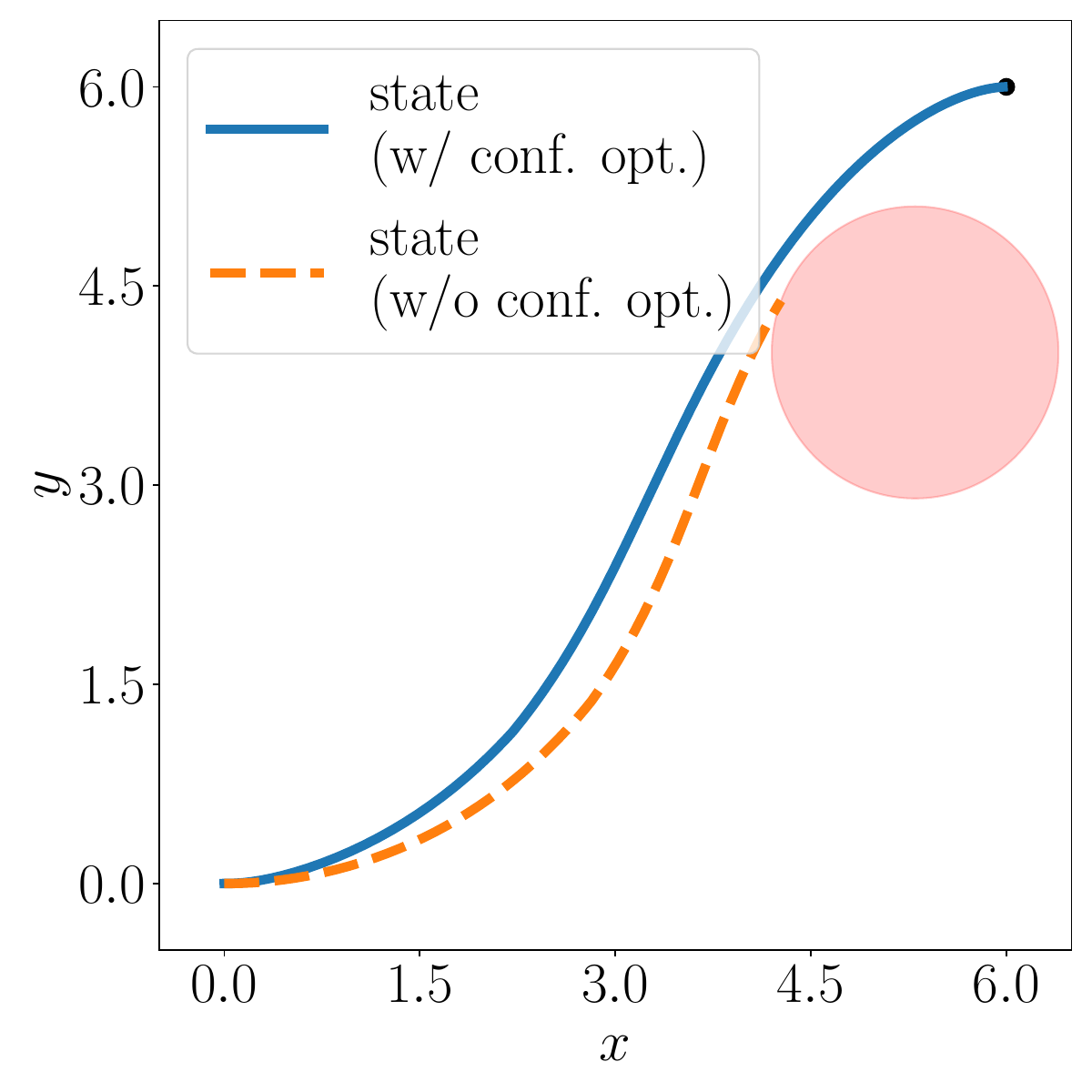}
        \caption{Trajectories.}
        \label{fig:p2_trajs}
    \end{subfigure}
    \begin{subfigure}[b]{0.19\textwidth}
        \centering
        \includegraphics[width=\textwidth]{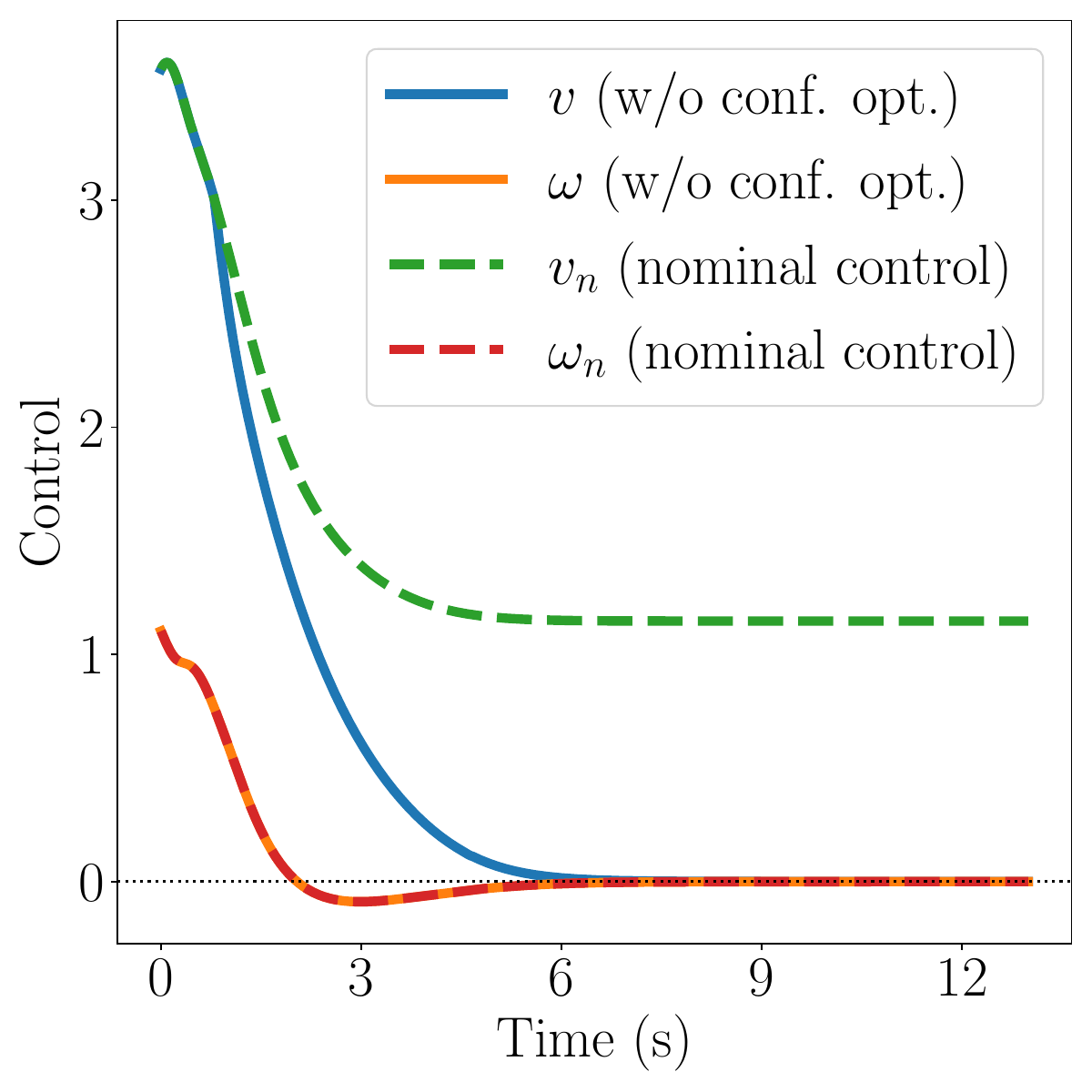}
        \caption{Controls ($c_1 = 0$).}
        \label{fig:p2_control_wo_opt}
    \end{subfigure}
    \begin{subfigure}[b]{0.19\textwidth}
        \centering
        \includegraphics[width=\textwidth]{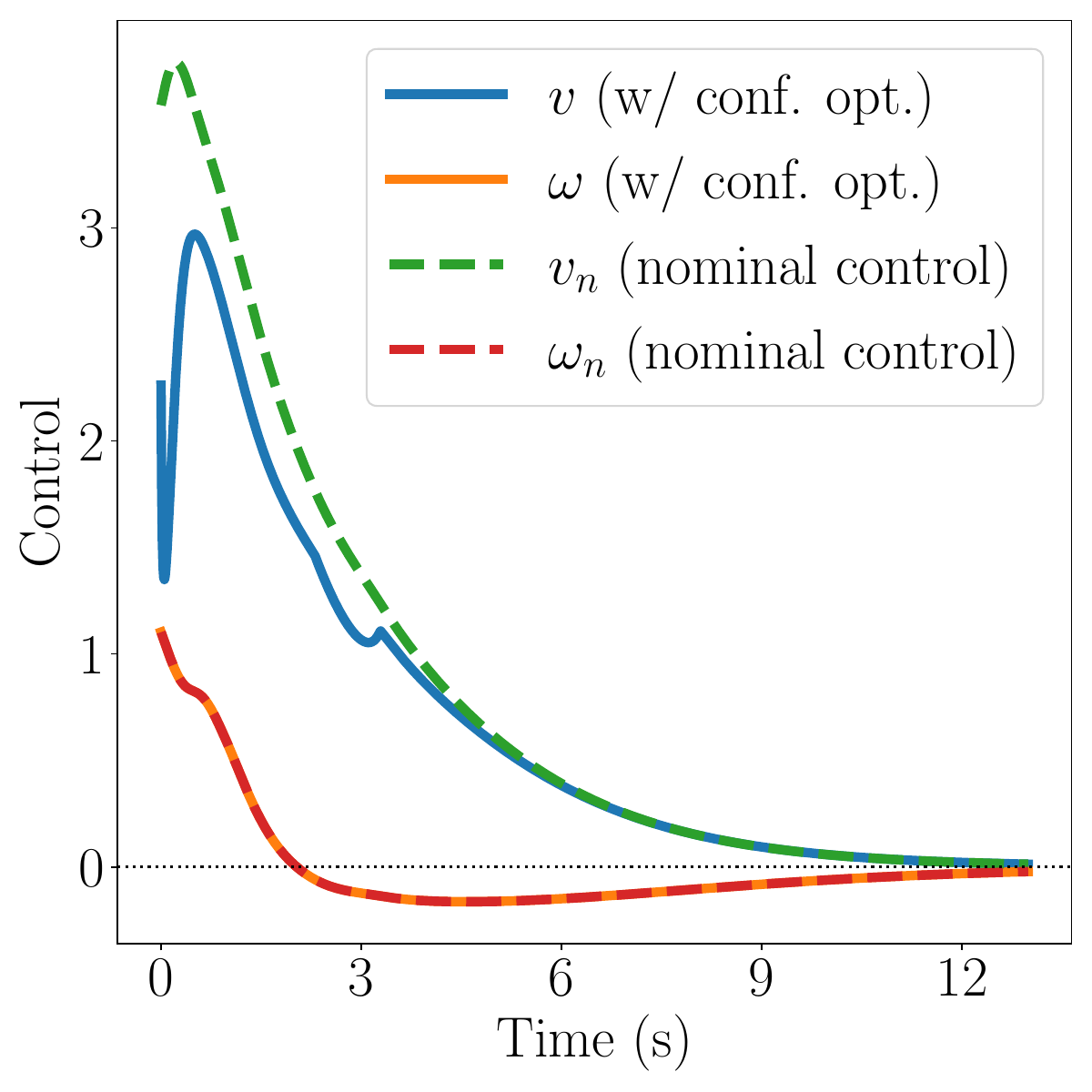}
        \caption{Controls ($c_1 = 1000$).}
        \label{fig:p2_control_w_opt}
    \end{subfigure}
    \begin{subfigure}[b]{0.19\textwidth}
        \centering
        \includegraphics[width=\textwidth]{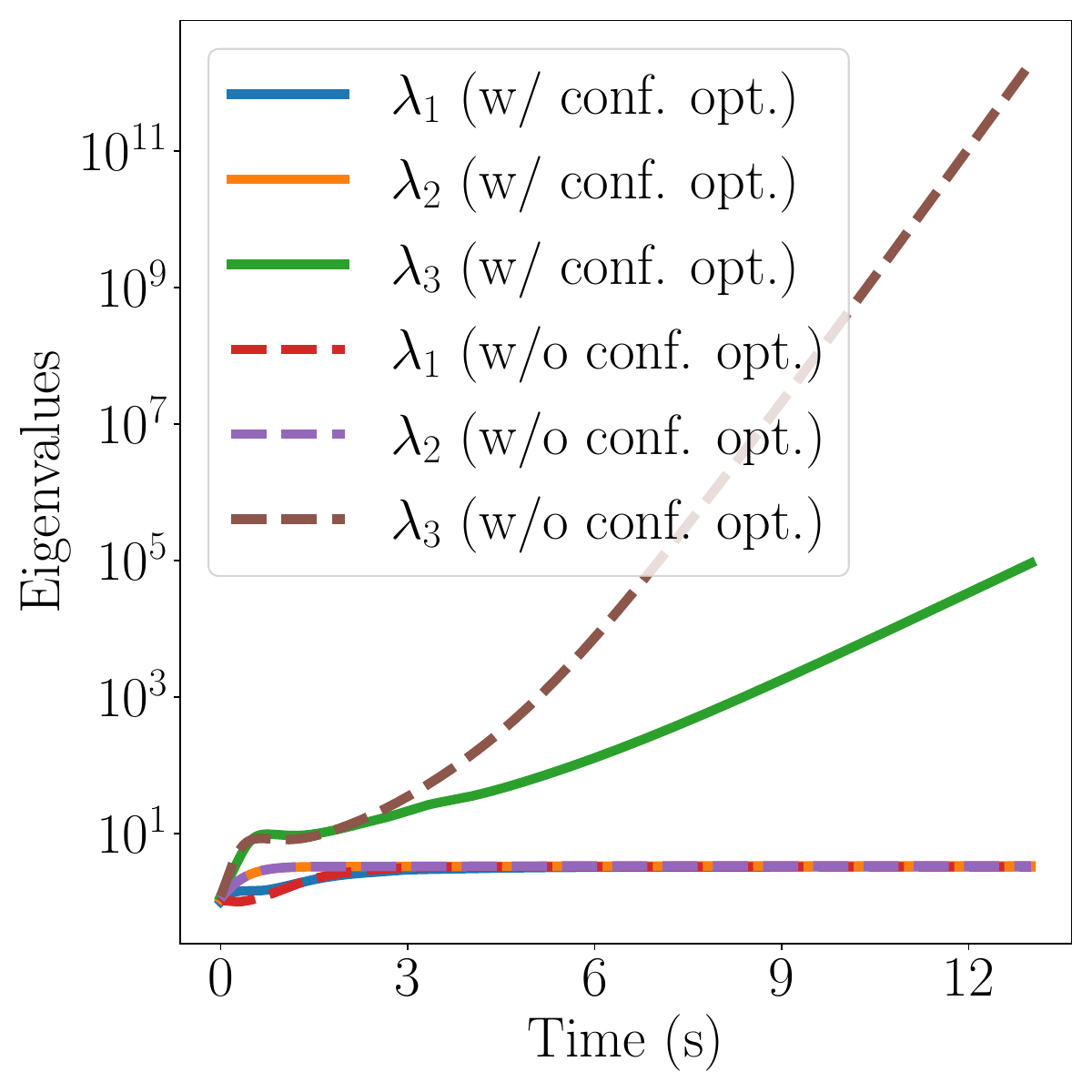}
        \caption{Eigenvalues of $P(t)$.}
        \label{fig:p2_eigenvalues}
    \end{subfigure}
    \begin{subfigure}[b]{0.19\textwidth}
        \centering
        \includegraphics[width=\textwidth]{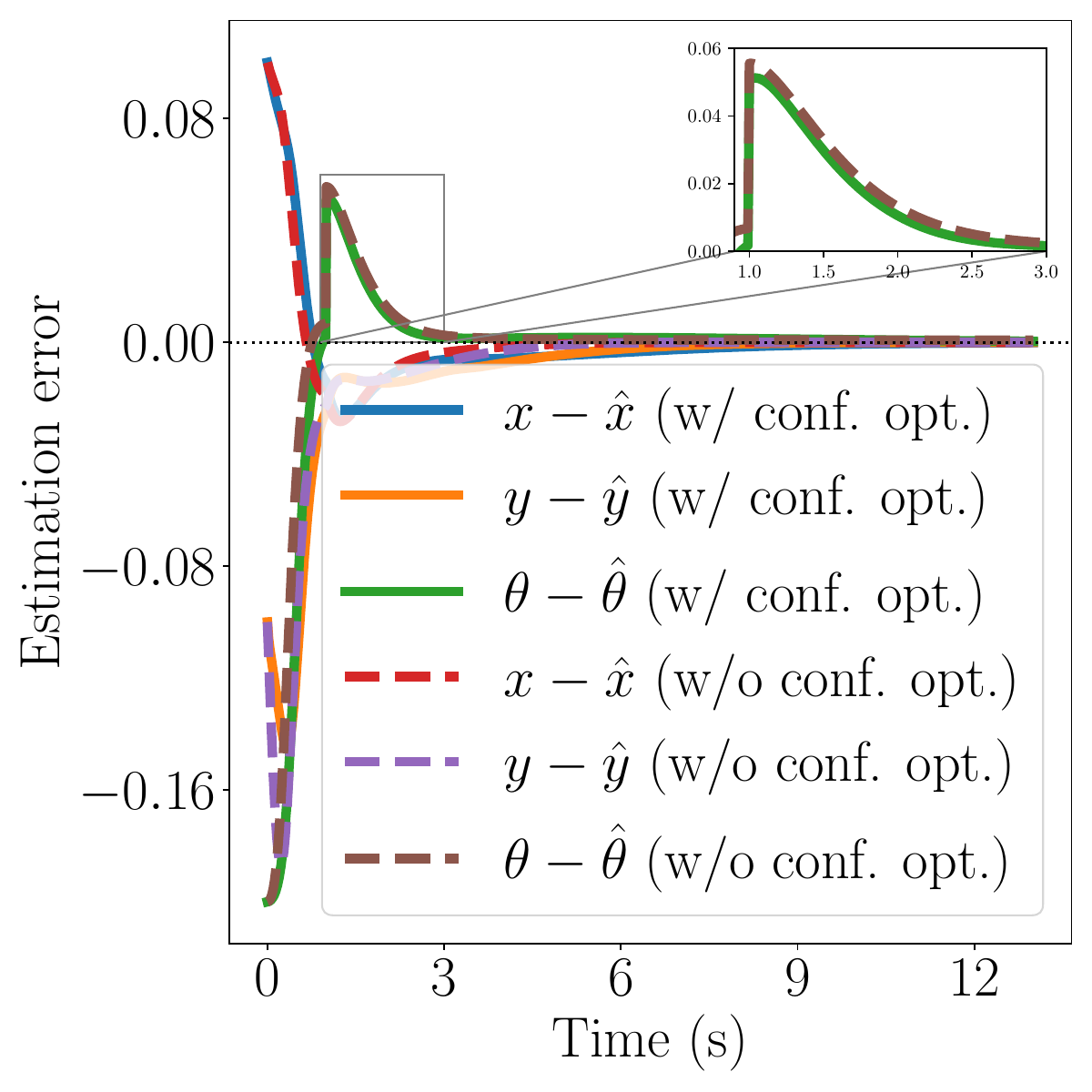}
        \caption{State estimation error.}
        \label{fig:p2_ekf_error}
    \end{subfigure}

    \caption{A unicycle control problem. \subref{fig:p2_trajs}: Trajectories with and without confidence optimization. \subref{fig:p2_control_wo_opt} and \subref{fig:p2_control_w_opt}: Controls with and without confidence optimization. \subref{fig:p2_eigenvalues} and \subref{fig:p2_ekf_error}: Comparison of the eigenvalues (of $P(t)$) and state estimation errors given by the observer.}
    \label{fig:p2}
\end{figure*}

\section{Simulation Studies}
\subsection{A Second-Order Nonlinear System}\label{sec:sec_order_sys}
Consider the following second-order nonlinear system 
\begin{equation}\label{eq:eg1_dynamics}
    \dot{x}_1 = -x_1/4 -x_2, \quad
    \dot{x}_2 = x_1^3 - x_2/2 + (x_2^2+1) u
\end{equation}
with output $z = x_1$. A CLF for this system is $V(x) = x_1^4/4 + x_2^2/2$, and we can verify that $\dot{V}(x) = -V(x)$ for $u=0$. The CBF chosen for this system is $h(x) = -x_1/2 + x_2 + 0.5$, and we would like to make sure that the system remains in the closed half-plane where $h(x) \geq 0$.

In Fig.~\ref{fig:p1a}, the system is controlled using the solution to the optimization problem  \eqref{eq:P1a}. The case without confidence optimization ($c_1=0$) is analogous to the setting in \cite{agrawal2022safe}. From Figs.~\ref{fig:p1a_traj_wo_opt}, \ref{fig:p1a_traj_w_opt} and \ref{fig:p1a_control}, we can see that the solution with confidence optimization ($c_1=1000$) gives different control inputs that lead to a different system trajectory, but the system remains safe and stable in both cases. In Figs.~\ref{fig:p1a_eigenvalues} and \ref{fig:p1a_ekf_error}, the larger eigenvalue $\lambda_{\max}$ of $P(t)$ is reduced and the estimation $x_2 -\hat{x}_2$ decreases faster. In fact, for this example, both of the eigenvalues of $P(t)$ are reduced.

\subsection{The Unicycle System}
\begin{figure}[H]
    \centering
    \includegraphics[width=0.16\textwidth]{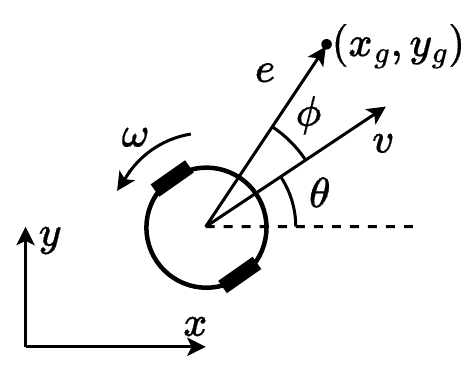}
    \caption{The unicycle system.}
    \label{fig:eg2_unicycle}
\end{figure}
The dynamics of the unicycle system are 
\begin{equation}\label{eq:eg2_dynamics}
     \dot{x} = v \cos(\theta), \quad
    \dot{y} = v \sin(\theta), \quad
    \dot{\theta} = \omega
\end{equation}
where $x$ and $y$ are coordinates of the unicycle in the world frame, and $\theta$ is the angle between the heading direction and the $x$-axis (see Fig.~\ref{fig:eg2_unicycle}). The outputs of the system are $x$ and $y$. Let the coordinates of the goal position be $(x_g, y_g)$. A feedback controller is proposed in \cite{aicardi1995closed}:
\begin{subequations}\label{eq:eg2_nominal_control}
\begin{align}
    v_n &= d_1 e \cos(\phi)\\
    \omega_n &= d_2 \phi + d_1 \cos(\phi) \sin(\phi) [\phi + d_3 (\phi + \theta)]/\phi
\end{align}
\end{subequations}
where $d_1, d_2, d_3 > 0$ are design parameters and 
\begin{subequations}
\begin{align}
    e &= \sqrt{(x-x_g)^2 + (y-y_g)},\\
    \phi &= \operatorname{atan2}(y_g-y, x_g-x) - \theta.
\end{align}
\end{subequations}
For this example, the task is to reach the goal position $(6,6)$ (by tracking the control given by $v_n$ and $\omega_n$) while avoiding a circular obstacle located at $(x_o, y_o) = (5.3, 4)$ with a radius of $r_o = 1.1$. The CBF for this task is defined as 
\begin{equation*}
    h(x,y) = (x-x_o)^2 + (y-y_o)^2 -r_o^2,
\end{equation*}
and we require the unicycle robot to stay in the region where $h(x,y) \geq 0$. The goal position is $(x_g, y_g) = (6,6)$, represented by a black dot in Fig.~\ref{fig:p2_trajs}.

For this example, we introduce an impulse disturbance at  $t=1$~\si{s} to the dynamics of $\theta$. The height of this impulse is drawn from a uniform distribution $\cal{U}(-0.5,0.5)$. From Fig.~\ref{fig:p2_trajs}, we see that the control inputs without confidence optimization ($c_1 = 0$) cannot complete the navigation task, while the control inputs with confidence optimization ($c_1 = 1000$) meet the safety requirement and complete the task. From Figs.~\ref{fig:p2_control_wo_opt} and \ref{fig:p2_control_w_opt}, it may be observed that the linear velocity (in blue) is reduced in the beginning when $c_1 = 1000$ compared with the case where $c_1 = 0$. The executed angular speed $\omega$ (in orange) overlaps with the nominal angular speed $\omega_n$ (in red) in both Figs.~\ref{fig:p2_control_wo_opt} and \ref{fig:p2_control_w_opt} because the CBF for this task is independent of $\theta$. As we reduce the largest eigenvalue of $P(t)$ after $t =2$~\si{s} (see Fig.~\ref{fig:p2_eigenvalues}), a faster decrease in the estimation error $\theta - \hat{\theta}$ can be observed in Fig.~\ref{fig:p2_ekf_error}. It is also worth noticing that the maximum linear velocity that occurred in Fig. \ref{fig:p2_control_w_opt} is smaller than that in Fig. \ref{fig:p2_control_wo_opt}, which means that the control design with confidence optimization requires a smaller actuator and still can achieve safety and control objectives in this task.

\section{Conclusion}
This work addresses the synthesis of confidence-aware, safe, and stable control for control-affine systems in the output-feedback setting. We formulate two confidence-aware optimization problems, demonstrate their feasibility, and establish the Lipschitz continuity of the obtained solutions. Simulation studies indicate improvements in estimation accuracy and the fulfillment of safety and control requirements.

\bibliographystyle{IEEEtran}
\bibliography{master}

\end{document}